\documentclass[11pt,letterpaper]{article}

\usepackage{color}
\usepackage{marvosym}
\usepackage{appendix}


\setcounter{tocdepth}{3}
 %
 %

\usepackage[margin = 1in]{geometry} %

\usepackage{amsmath, amsfonts, amsthm}
\usepackage{latexsym}
\usepackage{comment}
\usepackage{hyperref}
\usepackage{enumerate}
\usepackage{float}
\floatstyle{plaintop}
\restylefloat{table}
\usepackage[]{caption}
\usepackage{graphicx}
\usepackage{epstopdf}
\usepackage{subcaption}
\usepackage[T1]{fontenc}
\usepackage{cite}

\usepackage{wrapfig}

\let\OLDthebibliography\thebibliography
\renewcommand\thebibliography[1]{
	\OLDthebibliography{#1}
	\setlength{\parskip}{0pt}
	\setlength{\itemsep}{0pt plus 0.3ex} 
}


\newtheorem{definition}{Definition}

\newtheorem{lemma}{Lemma}
\newtheorem{theorem}{Theorem}
\newtheorem{proposition}{Proposition}

\graphicspath {{figures/}}


\title{Eternally Dominating Large Grids\footnote{A preliminary version of this paper appeared in the \emph{10th International Conference in Algorithms and Complexity} (CIAC), Springer LNCS, vol. 10236, pp. 393--404, 2017.}}

\author{
	Ioannis Lamprou, 
	Russell Martin,
	Sven Schewe
	\\
	\texttt{\{Forename.Surname\}@liverpool.ac.uk}
	\\ \\
	Department of Computer Science,
	\\
	University of Liverpool, Liverpool, UK 
}

\begin{document}
	
	\maketitle
	
	\begin{abstract} 
		In the m-\emph{Eternal Domination} game, a team of guard tokens initially occupies a dominating set on a graph $G$.
		An attacker then picks a vertex without a guard on it and attacks it. %
		The guards defend against the attack: one of them has to move to the attacked vertex, while each remaining one can choose to move to one of his neighboring vertices. 
		The new guards' placement must again be dominating.
		This attack-defend procedure continues eternally.
		The guards win if they can eternally maintain a dominating set against any sequence of attacks, otherwise the attacker wins.
		
		The m-\emph{eternal domination number} for a graph $G$ is the minimum amount of guards such that they win against any attacker strategy in $G$ (all guards move model).
		We study rectangular grids and provide the first known general upper bound on the m-eternal domination number for these graphs.
		Our novel strategy implements a square rotation principle and eternally dominates $m \times n$ grids by using approximately $\frac{mn}{5}$ guards, which is asymptotically optimal even for ordinary domination.
		\\[10pt]
		\textbf{Keywords:} Eternal Domination, Combinatorial Game, Two players, Graph Protection, Grid
	\end{abstract} 
	
	\section{Introduction}
	As a natural goal in military deterrence and defence strategies, patrolling a network has always remained topical throughout history.
	In the context of graph searching, such a patrolling task is often modeled as a combinatorial pursuit-evasion game played on a graph.
	This paper studies such a game for a task that requires the eternal domination of a given network.
	
	The \emph{Roman Domination} problem was introduced in \cite{Stewart}: 
	where should Emperor Constantine the Great have located his legions in order to optimally defend against attacks in unsecured locations without leaving another location unsecured?
	In graph theoretic terms, the interest is in producing a \emph{dominating set} of the graph, i.e., a guard placement where each vertex must have a guard on it or on at least one of its neighbors, with possibly some extra problem-specific qualities.
	Some seminal work on this topic includes \cite{Henning, ReVelle}. 
	
	The above model caters only for a single attack on an unsecured vertex.
	A natural question is to consider special domination strategies against a sequence of attacks on the same graph \cite{Burger2}. 
	In this setting, (some of) the guards are allowed to move after each attack to defend against it and modify their overall placement. 
	The difficulty here lies in establishing a guards' placement in order to retain domination after coping with each attack. 
	Such a sequence of attacks can be of finite, i.e., a set of $k$ consecutive attacks, or even \emph{infinite} length.
	
	In this paper, we focus on the latter.
	We wish to protect a graph against attacks happening indefinitely on its vertices.
	Initially, the guards are placed on some vertices of the graph such that they form a dominating set, \emph{with at most one guard per vertex}.
	Then, an attack occurs on an unoccupied vertex. 
	All the guards (may) now move in order to counter the attack:
	One of them moves to the attacked vertex, while each of the others moves to an adjacent vertex of theirs such that the new guards' placement again forms a dominating set.
	This takes place ad infinitum.
	
	The attacker's objective is to devise a sequence of attacks, which leads the guards to a non-dominating placement.
	On the other hand, the guards wish to maintain a sequence of dominating sets without any interruption.
	The m-\emph{Eternal Domination} problem, studied in this paper, deals with determining the minimum number of guards such that they eternally protect the graph in the above fashion.
	The focus is on rectangular grids, where, to the best of our knowledge, we provide a first general upper bound.
	
	\paragraph{Related Work.} 
	Infinite order domination was first considered by Burger et al. \cite{Burger} as an extension to finite order domination.
	Later on, Goddard et al. \cite{Goddard} proved some first bounds with respect to other graph-theoretic notions 
	(like independence and clique cover) for the one-guard-moves and all-guards-move cases.
	The relationship between eternal domination and clique cover is examined more carefully in \cite{Anderson}.
	There exists a series of other papers with several combinatorial bounds, e.g., see \cite{Goldwasser, Henning2, Henning3, Klos1}. 
	
	Regarding the special case of grid graphs, Chang \cite{Chang} gave many strong upper and lower bounds for the domination number.
	Indeed, Gon\c{c}alves et al. \cite{Goncalves} proved Chang's construction optimal for rectangular grids where both dimensions are greater or equal to $16$.
	Moving onward to eternal domination, bounds for $3\times n$ \cite{3n, Messinger}, $4\times n$ \cite{4n} and $5\times n$ \cite{5n} grids have been examined,
	where the bounds are almost tight for $3 \times n$ and exactly tight for $4\times n$.
	
	Due to the mobility of the guards in eternal domination and the breakdown into alternate turns (guards vs attacker),
	one can view this problem as a pursuit-evasion combinatorial game in the same context as \emph{Cops \& Robber} \cite{CopRobber} and the \emph{Surveillance Game} \cite{Surv, Surv2}.
	In all three of them, there are two players who alternately take turns, with one of them pursuing the other possibly indefinitely.
	
	An analogous \emph{Eternal Vertex Cover} problem has been considered \cite{Fomin, Klos2, Klos3}, where attacks occur on the edges of the graph.
	In that setting, the guards defend against an attack by traversing the attacked edge, while they move in order to preserve a vertex cover after each turn.
	
	Recently, the m-\emph{eviction number} is studied in \cite{NewProblem}, where attacks occur on the vertices occupied by guards and they have to move to survive, whilst always maintaining a dominating set.
	
	For an overall picture and further references on the topic, we refer the reader to a recent survey on graph protection \cite{Survey}.
	
	\paragraph{Our Result.}
	We make a first step towards answering an open question raised by Klostermeyer and Mynhardt \cite{Survey}: We show that, in order to ensure m-eternal domination in rectangular grids,
	only a linear number of extra guards is needed compared to domination. 
	
	To obtain this result, we devise an unravelling strategy of successive (counter) clockwise rotations for the guards to eternally dominate an infinite grid. 
	This strategy is referred to as the \emph{Rotate-Square} strategy.
	Then, we apply the same strategy to finite grids with some extra guards to ensure the boundary remains always guarded. 
	Overall, we show that $\lceil \frac{mn}{5} \rceil + \mathcal{O}(m+n)$ guards suffice to eternally dominate an $m \times n$ grid, for $m, n \ge 16$.
	
	\paragraph{Outline.}
	
	In Section~\ref{sec:pre}, we define some basic graph-theoretic notions and m-\emph{Eternal Domination} as a two-player combinatorial pursuit-evasion game.
	Later, in Section~\ref{sec:inf}, we describe the basic components of the Rotate-Square strategy and prove that it can be used to dominate an infinite grid forever. 
	Later, in Section~\ref{sec:fin} we show how the strategy can be adjusted to eternally dominate finite grids by efficiently handling movements near the boundary and the corners.
	Finally, in Section~\ref{sec:con}, we conclude with some final remarks and open questions.
	
	\section{Preliminaries}\label{sec:pre}
	
	Let $G = (V(G),E(G))$ be a simple connected undirected graph.
	We denote an edge between two connected vertices, namely $v$ and $u$, as $(u, v) \in E(G)$ (or equivalently $(v,u)$).
	The \emph{open-neighborhood} of a subset of vertices $S \subseteq V(G)$ is defined as $N(S) = \{v \in V(G)\setminus S: \exists u \in S \text{ such that } (u,v) \in E(G)\}$ and
	the \emph{closed-neighborhood} as $N[S] = S \cup N(S)$.
	For a single vertex $v \in V(G)$, we simplify the notation $N(\{v\})$ to $N(v)$ and, similarly, $N[\{v\}]$ to $N[v]$. A \emph{path} of length $n-1 \in \mathbb{N}$, namely $P_n$, is a graph where $V(P_n) = \{v_0, v_1, \ldots, v_{n-1}\}$ and
	$E(P_n) = \{(v_0, v_1), (v_1, v_2), \ldots (v_{n-2}, v_{n-1})\}$.
	The \emph{Cartesian product} of two graphs $G$ and $H$ is another graph denoted $G \Box H$ where
	$V(G\Box H) = V(G) \times V(H)$ and two vertices  $(v,v')$  and $(u,u')$ are adjacent if either $v = u$ and $(v',u') \in E(H)$ or $v' = u'$ and $(v, u) \in E(G)$.
	A \emph{grid}, namely $P_m \Box P_n$, is the Cartesian product of two paths of lengths $m-1, n-1 \in \mathbb{N}$. 
	
	A set of vertices $S \subseteq V(G)$ is called a \emph{dominating set} of $G$ if $N[S] = V(G)$.
	That is, for each $v \in V(G)$, either $v \in S$ or there exists a vertex $u \in S$ $(u \neq v)$ such that $(u, v) \in E(G)$.
	A minimum-size such set, say $S^*$, is called a \emph{minimum dominating set} of $G$ and $\gamma(G) = |S^*|$  is defined as
	the \emph{domination number} of $G$.
	For grids, we simplify the notation $\gamma(P_m \Box P_n)$ to $\gamma_{m,n}$.
	
	\emph{Eternal Domination} can be regarded as a combinatorial pursuit-evasion game played on a graph $G$. 
	There exist two players: one of them controls the \emph{guards}, while the other controls the \emph{attacker}. 
	The game takes place in \emph{rounds}. Each round consists of two \emph{turns}: one for the guards and one for the attacker.
	
	Initially (round $0$), the guard tokens are placed such that they form a dominating set on $G$. 
	Then, without loss of generality, the attacker attacks a vertex without a guard on it. 
	A guard, dominating the attacked vertex, must now move on it to counter the attack.
	Notice that at least one such guard exists because their initial placement is dominating.
	Moreover, the rest of the guards may move; a guard on vertex $v$ can move to any vertex in $N[v]$.
	The guards wish to ensure that their modified placement is still a dominating set for $G$.
	The game proceeds in a similar fashion in any subsequent round. 
	Guards win if they can counter any attack of the attacker and eternally maintain a dominating set; that is, for an infinite number of attacks.
	Otherwise, the attacker wins, as he manages to force the guards to reach a placement that is no longer dominating;
	then, an attack on an undominated vertex suffices to win.
	From now on, we say that a vertex is \emph{unoccupied} when no guard lies on it.
	
	\begin{definition}
		$\gamma^\infty_{\mathrm{m}}(G)$ stands for the \emph{m-eternal domination number} of a graph $G$, 
		i.e., the minimum size of a guards' team that can eternally dominate $G$ (when \emph{all} guards can move at each turn).
	\end{definition}
	
	As above, we simplify $\gamma^\infty_{\mathrm{m}}(P_m \Box P_n)$ to $\gamma^\infty_{m,n}$.
	Since the initial guards' placement is dominating, we get $\gamma^\infty_{\mathrm{m}}(G) \ge \gamma(G)$ for any graph $G$.
	By a simple rotation, we get $\gamma_{m,n} = \gamma_{n, m}$ and $\gamma^\infty_{m,n} = \gamma^\infty_{n,m}$.
	Finally, multiple guards are \emph{not} allowed to lie on a single vertex, since this could provide an advantage for the guards \cite{note}.
	
	\section{Eternally Dominating an Infinite Grid} \label{sec:inf}
	In this section, we describe a strategy to eternally dominate an \emph{infinite grid}. 
	We denote an infinite grid as $G_\infty$ and define it as a pair $(V(G_\infty), E(G_\infty))$, 
	where $V(G_\infty) = \{(x, y): x, y \in \mathbb{Z}\}$ and any vertex $(x, y) \in V(G_\infty)$ is adjacent to $(x,y-1)$, $(x, y+1)$, $(x-1, y)$ and $(x+1, y)$.
	In all figures to follow, we view the grid as a mesh, i.e.,\ similar to a chessboard, where each \emph{cell} of the mesh corresponds to a vertex of $V(G_\infty)$ and neighbors four other cells: the one above, below, left and right of it.
	We assume row $x$ is \emph{above} row $x+1$ and column $y$ is \emph{left} of column $y+1$.
	Each guard occupies a single cell and has the capability to move to an adjacent cell (left, right, up or down) during the guards' turn.
	For a visual explanation of the grid-mesh correspondence, see Figure~\ref{fig:corr}.
	
	\begin{figure}[h]
		\centering
		\begin{tabular}{cc}
			\begin{subfigure}{0.45\linewidth}
				\centering
				\includegraphics[scale= 2.25]{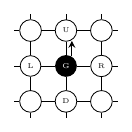}
			\end{subfigure}
			&
			\begin{subfigure}{0.45\linewidth}
				\centering
				\includegraphics[scale= 2.25]{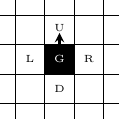}
			\end{subfigure}
			\\
			\begin{minipage}{.45\textwidth}
				\leavevmode\subcaption{Some local view of a grid graph consisting of vertices and edges}
			\end{minipage}
			&
			\begin{minipage}{.45\textwidth}
				\leavevmode\subcaption{The local view of the corresponding mesh configuration consisting of cells}
			\end{minipage}	
		\end{tabular}
		\caption{From a grid graph (a) to a mesh configuration (b): a guard lying on vertex/cell $G$ can move to any of its neighboring vertices/cells $L$, $R$, $U$, $D$ during the guards' turn}
		\label{fig:corr}
	\end{figure}
	
	Initially, let us consider a family of dominating sets for $G_\infty$.
	In the following, let $\mathbb{Z}^2 := \mathbb{Z} \times \mathbb{Z}$ and let $\mathbb{Z}_5 := \{0, 1, 2, 3, 4\}$ stand for the group of integers modulo $5$.
	We then define the function $f:\mathbb{Z}^2 \rightarrow \mathbb{Z}_5$ as $f(x, y) = x + 2y \;(\bmod\; 5)$ for any $(x, y) \in \mathbb{Z}^2$.
	This function appears in \cite{Chang} and is central to providing an optimal dominating set for sufficiently large finite grids.
	Now, let $D_{t} = \{(x,y) \in V(G_\infty) : f(x,y) = t\}$ for $t \in \mathbb{Z}_5$ and $\mathcal{D}(G_\infty) = \{D_{t}: t \in \mathbb{Z}_5\}$.
	For purposes of symmetry, let us define $f'(x, y) = f(y,x)$ and then let $D'_t =  \{(x,y) \in V(G_\infty): f'(x,y) = t\}$ and $\mathcal{D}'(G_\infty) = \{D'_{t}: t \in \mathbb{Z}_5\}$.
	
	
	\begin{proposition}\label{prop:D_t}
		Any $D_t, D'_t \in \mathcal{D}(G_\infty) \cup \mathcal{D}'(G_\infty)$ is a dominating set for $G_\infty$.
	\end{proposition}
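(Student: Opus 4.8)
The plan is to show that each set $D_t$ (and symmetrically each $D'_t$) is a perfect dominating set, meaning every node of $G_\infty$ has exactly one element of $D_t$ in its closed neighborhood. Since the symmetric case follows by swapping coordinates (the map $(x,y)\mapsto(y,x)$ is a graph automorphism of $G_\infty$ carrying $D_t$ to $D'_t$), I would focus entirely on $D_t$ for a fixed $t\in\mathbb{Z}_5$.

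First I would record the key arithmetic fact about $f$: the five nodes in the closed neighborhood $N[(x,y)] = \{(x,y),(x-1,y),(x+1,y),(x,y-1),(x,y+1)\}$ have $f$-values $f(x,y)$, $f(x,y)-1$, $f(x,y)+1$, $f(x,y)-2$, $f(x,y)+2$ respectively (all mod $5$), because $f(x\pm1,y) = f(x,y)\pm1$ and $f(x,y\pm1) = f(x,y)\pm2$. These five residues $\{f(x,y), f(x,y)\pm1, f(x,y)\pm2\}$ are precisely all of $\mathbb{Z}_5$, since $\{0,\pm1,\pm2\} = \{0,1,2,3,4\} \pmod 5$. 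Hence for every node $(x,y)$ and every target residue $t$, exactly one node in $N[(x,y)]$ lies in $D_t$.

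From this, domination is immediate: given any $(x,y)\in V(G_\infty)$, the (unique) neighbor or self in $N[(x,y)]$ whose $f$-value equals $t$ is a member of $D_t$, so $(x,y)\in N[D_t]$. Therefore $N[D_t] = V(G_\infty)$, i.e. $D_t$ is a dominating set. The same computation with $f'$ in place of $f$ — or simply invoking the coordinate-swap automorphism — handles every $D'_t$, completing the proof for all of $\mathcal{D}(G_\infty)\cup\mathcal{D}'(G_\infty)$.

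I do not anticipate a genuine obstacle here; the statement is essentially the classical observation (attributed to \cite{Chang} in the excerpt) that the diagonal pattern $x+2y\equiv t$ tiles the grid by closed neighborhoods. The only point requiring a line of care is verifying that $\{0,1,-1,2,-2\}$ exhausts $\mathbb{Z}_5$ — which is where the specific coefficient $2$ in $f(x,y)=x+2y$ matters, since it makes the horizontal and vertical steps produce residue offsets $\pm1$ and $\pm2$ that together cover everything. I would state that explicitly and leave the rest as a direct check.
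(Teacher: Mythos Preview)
Your argument is correct and is essentially the same as the paper's: both rest on the observation that the five $f$-values on a closed neighborhood are $f(x,y)+\{0,\pm1,\pm2\}$, hence all of $\mathbb{Z}_5$, so exactly one lies in $D_t$. The only cosmetic difference is that the paper spells this out as a five-way case split on $f(w,z)-t$ and then repeats the split verbatim for $f'$, whereas you package it as a single bijectivity remark and dispatch $D'_t$ via the coordinate-swap automorphism.
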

	\begin{proof}
		Let $(x, y) \in V(G_\infty)$ and $f(x, y) = t \in \{0, 1, 2, 3, 4\}$. We consider all possible cases for another vertex $(w,z) \in V(G_\infty)$:
		\begin{itemize}
			\item If $f(w,z) = t$, then $(w,z) \in D_{t}$.
			\item If $f(w,z) = t + 1 \;(\bmod\; 5)$, then $f(w-1, z) = t$ and so $(w-1, z) \in D_{t}$ dominates $(w,z)$.
			\item If $f(w,z) = t - 1 \;(\bmod\; 5)$, then $f(w+1,z) = t$ and so $(w+1, z) \in D_{t}$ dominates $(w,z)$.
			\item If $f(w,z) = t + 2 \;(\bmod\; 5)$, then $f(w,z-1) = t$ and so $(w,z-1)\in D_{t}$ dominates $(w,z)$.
			\item If $f(w,z) = t - 2 \;(\bmod\; 5)$, then $f(w,z+1) = t$ and so $(w,z+1)\in D_{t}$ dominates $(w,z)$.
		\end{itemize}
		Similarly, let $(x, y) \in V(G_\infty)$ and $f'(x, y) = t \in \{0, 1, 2, 3, 4\}$. Again, we consider all possible cases for another vertex $(w,z) \in V(G_\infty)$:
		\begin{itemize}
			\item If $f'(w,z) = t$, then $(w,z) \in D'_{t}$.
			\item If $f'(w,z) = t + 1 \;(\bmod\; 5)$, then $f'(w, z-1) = t$ and so $(w, z-1) \in D'_{t}$ dominates $(w,z)$.
			\item If $f'(w,z) = t - 1 \;(\bmod\; 5)$, then $f'(w, z+1) = t$ and so $(w, z+1) \in D'_{t}$ dominates $(w,z)$.
			\item If $f'(w,z) = t + 2 \;(\bmod\; 5)$, then $f'(w-1, z) = t$ and so $(w-1,z)\in D'_{t}$ dominates $(w,z)$.
			\item If $f'(w,z) = t - 2 \;(\bmod\; 5)$, then $f'(w+1, z) = t$ and so $(w+1,z)\in D'_{t}$ dominates $(w,z)$.
		\end{itemize}
	\end{proof}
	Notice that the above constructions form \emph{perfect} dominating sets, i.e., dominating sets where each vertex is dominated by \emph{exactly} one vertex (possibly itself),
	since, for each vertex $v \in V(G_\infty)$, exactly one vertex from $N[v]$ lies in $D_t$ (respectively $D'_t$) by the definition of $D_t$ (respectively $D'_t$). 
	
	\subsection{A First Eternal Domination Strategy}
	Let us now consider a \emph{shifting-style} strategy as a first simple strategy to eternally dominate $G_\infty$.
	The guards initially pick a placement $D_t$ for some $t \in \mathbb{Z}_5$.
	Next, an attack occurs on some unoccupied vertex.
	Since the $D_t$ placement perfectly dominates $G_\infty$, there exists exactly one guard adjacent to the attacked vertex.
	Therefore, it is mandatory for him to move onto the attacked vertex.
	His move defines a direction in the grid: left, right, up or down.
	The rest of the strategy reduces to each guard moving according to the defined direction.

	The aforementioned strategy works fine for the infinite grid.
	Nonetheless, applying it (directly or modified) to a finite grid encounters many obstacles.
	Shifting the guards toward one course leaves some vertices in the very end of the opposite course (near the boundary) undominated, since there is no longer an unlimited supply of guards to ensure protection.
	To overcome this problem, we propose a different strategy whose main aim is to redistribute the guards without creating any bias to a specific direction.

	\subsection{Unoccupied Squares}
	Another m-eternal domination strategy is to \emph{rotate} the guards' placement around subgrids of size $2\times2$, in which all four cells are unoccupied. 
	We refer to such a subgrid as an \emph{unoccupied square}.
	Intuitively, by using such an approach, the overall movement is zero and
	the guards always occupy a placement in $\mathcal{D}(G_\infty) \cup \mathcal{D}'(G_\infty)$ after an attack is defended.
	
	Consider some vertex $(x, y) \in V(G_\infty)$, where $(x, y) \in D_t$ for some value $t \in \mathbb{Z}_5$.
	Now, assume that the guards lie on the vertices specified in $D_t$ and hence form a dominating set.
	In Figure~\ref{fig:sq}, we partially view $G_\infty$ where the black cell represents a guard on some cell $(x,y) \in D_t$ and the gray cells represent guards elsewhere in $D_t$.
	By looking around $(x,y)$, we identify the existence of four \emph{unoccupied squares}.
	For $i = 0, 1, 2, 3$, let $SQ_i(x,y)$ denote the $i$-th unoccupied square with respect to $(x,y)$.
	
	{\small
		\begin{itemize}
			\item $SQ_0(x,y) = \{(x-1, y+1), (x-1, y+2), (x, y+1), (x, y+2)\}$
			\item $SQ_1(x,y) = \{(x+1, y), (x+1, y+1), (x+2, y), (x+2, y+1)\}$
			\item $SQ_2(x,y) = \{(x, y -2), (x, y-1), (x+1,y-2), (x+1, y-1) \}$
			\item $SQ_3(x,y) = \{(x-2, y-1), (x-2, y), (x-1, y-1),(x - 1, y)\}$
		\end{itemize}
	}
	In Figure~\ref{fig:sq}, a cell in an unoccupied square $SQ_i(x,y)$ has a label $SQ_i$.

	One can verify that, for every $(w,z) \in \bigcup_{i = 0}^3 SQ_i(x,y)$, we get $f(w,z) \neq f(x,y)$ and thus $(w,z) \notin D_t$.
	Notice that $(x, y)$ has exactly one adjacent cell in each of these unoccupied squares and is the only guard that dominates these four cells, since domination is perfect.
	We say that a guard on $(x,y)$ \emph{slides along} the side of an unoccupied square $SQ_i(x,y)$ when he moves to cell $(w,z)$ adjacent to $(x,y)$, where $(w,z)$ is also adjacent to a cell in $SQ_i(x,y)$.
	In other words, the $(x,y)$-guard's current and previous cells are both adjacent to a cell in $SQ_i(x,y)$.
	In the case of a $D_t$ placement, see Figure~\ref{fig:sq}, an attack on a cell $(w,z) \in SQ_i(x,y) \cap N((x,y))$ would mean the guard on $(x,y)$ moves to $(w,z)$ and slides along the side of unoccupied square $SQ_{(i+1) \bmod 4}(x,y)$.
	For example, an attack on the bottom-right cell of $SQ_3$ would mean the $(x,y)$-guard slides along $SQ_0$; see Figures~\ref{fig:slides1}, \ref{fig:slides2}.
	
	\begin{figure}
		\centering
		\begin{minipage}{\linewidth}
			\begin{subfigure}{\textwidth}
				\centering
				\includegraphics[scale = 0.75]{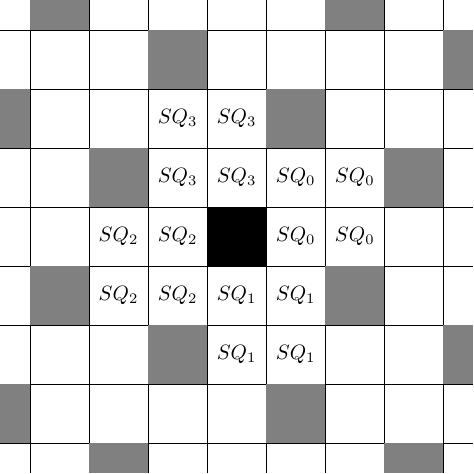}
				\caption{Unoccupied squares around $(x, y) \in D_t$; $(x,y)$ in black} 
				\label{fig:sq}
			\end{subfigure}
		\end{minipage} \\[1.25em]
		\begin{minipage}{\linewidth}
			\begin{subfigure}{\textwidth}
				\centering
				\includegraphics[scale = 0.75]{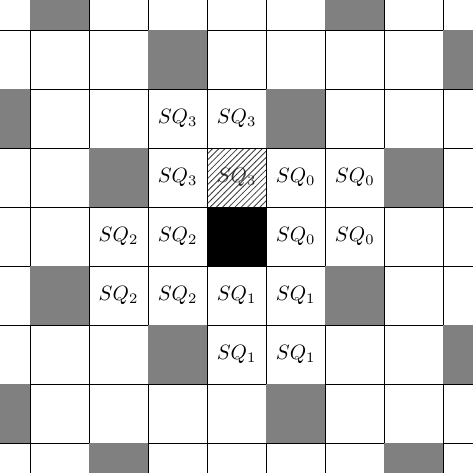}
				\caption{Attack on bottom-right cell of $SQ_3(x,y)$: we identify the defence-responsible guard $(x,y)$ (in black) and the corresponding unoccupied squares} 
				\label{fig:slides1}
			\end{subfigure}
		\end{minipage} \\[1.25em]
		\begin{minipage}{\linewidth}
			\begin{subfigure}{\textwidth}
				\centering
				\includegraphics[scale = 0.75]{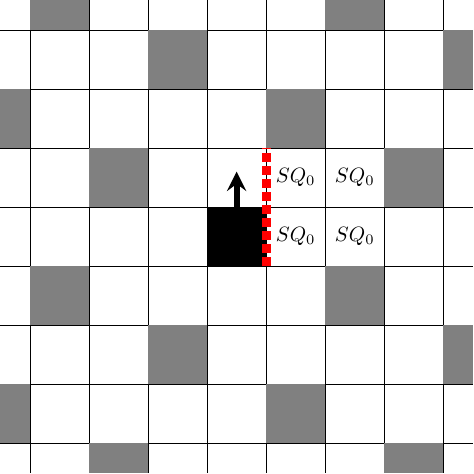}       
				\caption{The defence-responsible guard slides along a side of $SQ_0(x,y)$ (dotted line): its current and next cell are both adjacent to $SQ_0(x,y)$ cells}
				\label{fig:slides2}
			\end{subfigure}
		\end{minipage}
		\caption{Examples of defence-responsible guard, unoccupied squares and sliding along for the $D_t$ case}
		\label{fig:Dt-def}
	\end{figure}
	
	The aforementioned observations also extend to some vertex $(x,y)$ in a dominating set $D'_t$.
	We now define the four unoccupied squares as follows (see Figure~\ref{fig:sq1}):
	{\small
		\begin{itemize}
			\item $SQ'_0(x,y) = \{(x, y+1), (x, y+2), (x+1, y+1), (x+1, y+2)\}$
			\item $SQ'_1(x,y) = \{(x+1, y-1), (x+1, y), (x+2, y-1), (x+2, y)\}$
			\item $SQ'_2(x,y) = \{(x-1, y -2), (x-1, y-1), (x,y-2), (x, y-1) \}$
			\item $SQ'_3(x,y) = \{(x-2, y), (x-2, y+1), (x-1, y),(x - 1, y+1)\}$  
		\end{itemize}
	}
	Similarly to before, the squares are unoccupied, since for every $(w,z) \in \bigcup_{i = 0}^3 SQ'_i$ we get $f'(w,z) \neq f'(x,y)$ and thus $(w,z) \notin D'_t$.
	The $(x,y)$-guard  has exactly one adjacent cell in each of these unoccupied squares and protecting an attack on $SQ'_i$ now means sliding along the side of $SQ'_{(i-1) \bmod 4}$.
	For example, an attack on the bottom-right cell of $SQ'_2$ means the $(x,y)$-guard slides along $SQ'_1$ (Figures~\ref{fig:slides1'}, \ref{fig:slides2'}).
	
	\begin{figure}
		\centering
		\begin{minipage}{\linewidth}
			\begin{subfigure}{\textwidth}
				\centering
				\includegraphics[scale = 0.75]{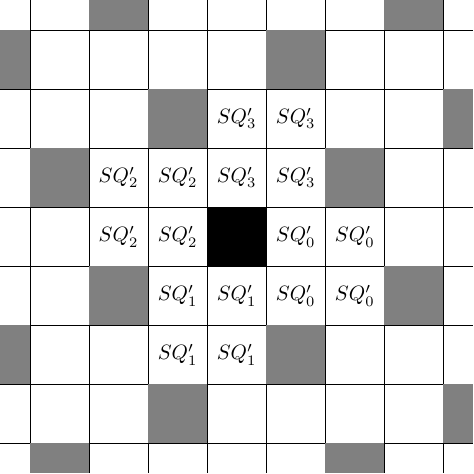}
				\caption{Unoccupied squares around $(x, y) \in D'_t$; $(x,y)$ in black} 
				\label{fig:sq1}
			\end{subfigure}
		\end{minipage} \\[1.25em]
		\begin{minipage}{\linewidth}
			\begin{subfigure}{\textwidth}
				\centering
				\includegraphics[scale = 0.75]{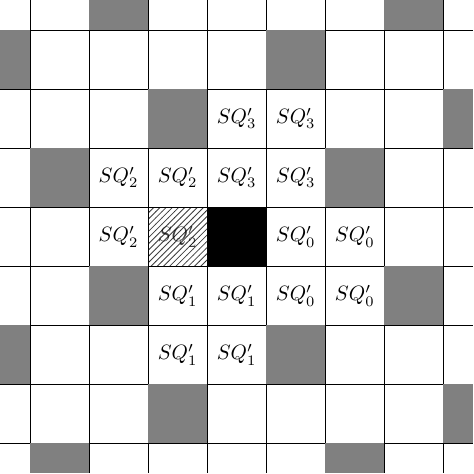}
				\caption{Attack on bottom-right cell of $SQ'_2(x,y)$:  we identify the defence-responsible guard $(x,y)$ (in black) and the corresponding unoccupied squares} 
				\label{fig:slides1'}
			\end{subfigure}
		\end{minipage} \\[1.25em]
		\begin{minipage}{\linewidth}
			\begin{subfigure}{\textwidth}
				\centering
				\includegraphics[scale = 0.75]{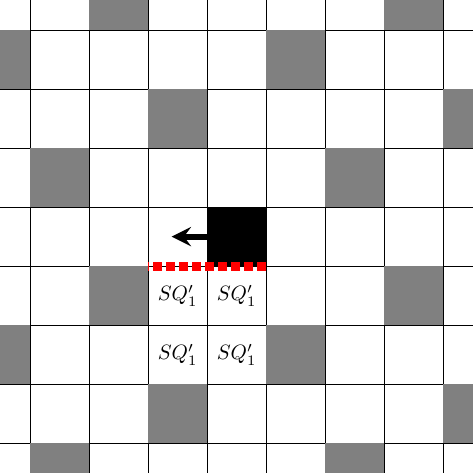}       
				\caption{The defence-responsible guard slides along a side of $SQ'_1(x,y)$ (dotted line): its current and next cell are both adjacent to $SQ'_1(x,y)$ cells}
				\label{fig:slides2'}
			\end{subfigure}
		\end{minipage}
		\caption{Examples of defence-responsible guard, unoccupied squares and sliding along for the $D'_t$ case}
		\label{fig:Dt'-def}
	\end{figure}
	
	\subsection{The Rotate-Square Strategy}
	
	We hereby describe the \emph{Rotate-Square} strategy and prove it eternally dominates $G_\infty$.
	The strategy makes use of the unoccupied squares idea.
	Once an attack occurs on some unoccupied vertex, since any $D_t$ or $D'_t$ dominating set the guards form is perfect, there exists a single guard who is able to defend against it.
	We refer to this guard as the \emph{defence-responsible} guard.
	Without loss of generality, let the defence-responsible guard lie on some cell $(x,y) \in D_t$ for some $t$.
	For $D'_t$ placements, the arguments are similar.
	We identify the four unoccupied squares around the defence-responsible guard as in Figure~\ref{fig:sq}.
	Assume the attack happened on a vertex $(w,z) \in SQ_i(x,y)$.
	Then, as discussed earlier, the defence-responsible guard moves to $(w,z)$ to defend against the attack and such a move means he is sliding along the side of square $SQ_{(i+1) \bmod 4}(x,y)$.
	We refer to $SQ_{(i+1) \bmod 4}(x,y)$ as the \emph{pattern square}, i.e., the unoccupied square along whose side the defence-responsible guard slides to defend the attack.
	
	Notice that, due to the grid topology and the fact that a $D_t$ placement is a perfect dominating set,
	there are exactly four guards adjacent to cells of the pattern square: exactly one guard per cell of the pattern square (one of them being the defence-responsible guard). 
	We refer to these four guards as the \emph{pattern guards}.
	In Figures~\ref{fig:pattern-guards} and \ref{fig:pattern-guards'}, we identify the pattern guards for each potential pattern square out of the four unoccupied squares related to a $D_t$ or $D'_t$ placement.
	Besides the defence-responsible guard, the other three guards dominating the pattern square also slide along a side of the pattern square, such that the four guards' overall movement looks as a rotation step around the pattern square.
	
	\begin{figure}
		\centering
		\begin{minipage}{\linewidth}
			\begin{subfigure}{0.45\textwidth}
				\centering
				\includegraphics[width = \textwidth]{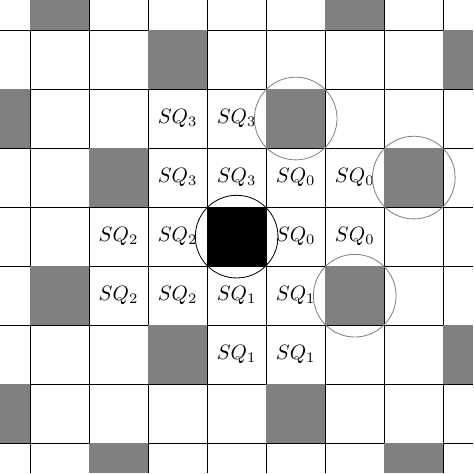}
				\caption{Pattern guards for $SQ_0$} 
				\label{fig:patternSQ0}
			\end{subfigure}
			\hfill
			\begin{subfigure}{0.45\textwidth}
				\centering
				\includegraphics[width = \textwidth]{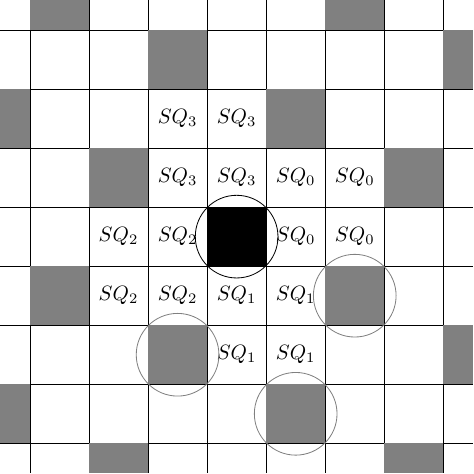}
				\caption{Pattern guards for $SQ_1$} 
				\label{fig:patternSQ1}
			\end{subfigure}
		\end{minipage}\\[3em]
		\begin{minipage}{\linewidth}
			\begin{subfigure}{0.45\textwidth}
				\centering
				\includegraphics[width = \textwidth]{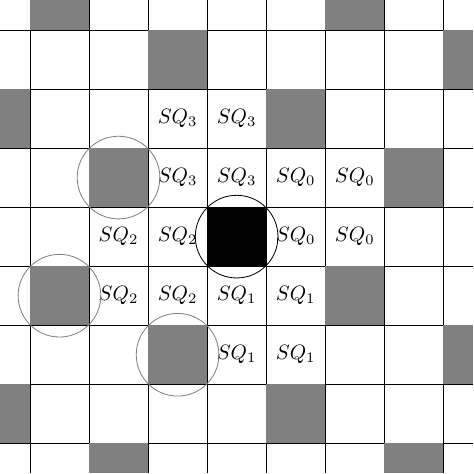}
				\caption{Pattern guards for $SQ_2$} 
				\label{fig:patternSQ2}
			\end{subfigure}
			\hfill
			\begin{subfigure}{0.45\textwidth}
				\centering
				\includegraphics[width = \textwidth]{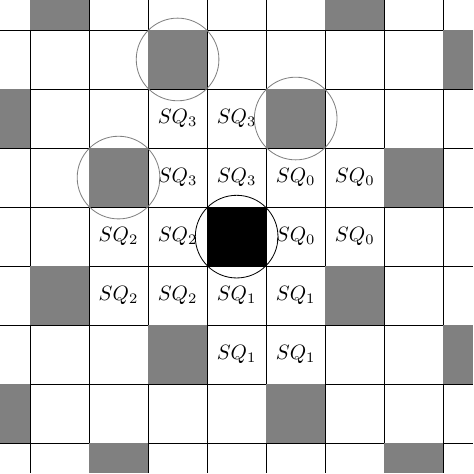}
				\caption{Pattern guards for $SQ_3$} 
				\label{fig:patternSQ3}
			\end{subfigure}
		\end{minipage}\\[1em]	
		\caption{Pattern guards for $D_t$ unoccupied squares (circled)}
		\label{fig:pattern-guards}
	\end{figure}
	
	\begin{figure}
		\centering
		\begin{minipage}{\linewidth}
			\begin{subfigure}{0.45\textwidth}
				\centering
				\includegraphics[width = \textwidth]{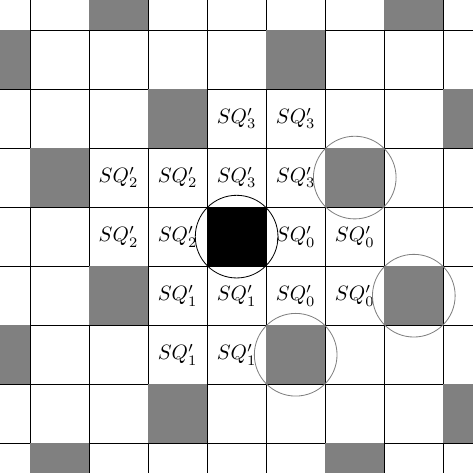}
				\caption{Pattern guards for $SQ'_0$} 
				\label{fig:patternSQ0'}
			\end{subfigure}
			\hfill
			\begin{subfigure}{0.45\textwidth}
				\centering
				\includegraphics[width = \textwidth]{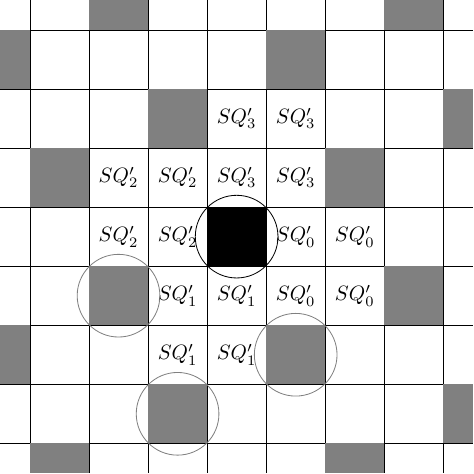}
				\caption{Pattern guards for $SQ'_1$} 
				\label{fig:patternSQ1'}
			\end{subfigure}
		\end{minipage}\\[3em]
		\begin{minipage}{\linewidth}
			\begin{subfigure}{0.45\textwidth}
				\centering
				\includegraphics[width = \textwidth]{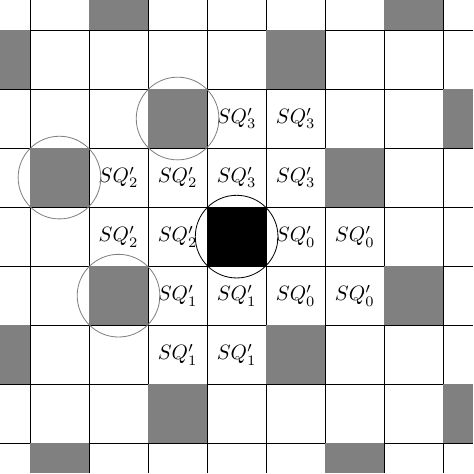}
				\caption{Pattern guards for $SQ'_2$} 
				\label{fig:patternSQ2'}
			\end{subfigure}
			\hfill
			\begin{subfigure}{0.45\textwidth}
				\centering
				\includegraphics[width = \textwidth]{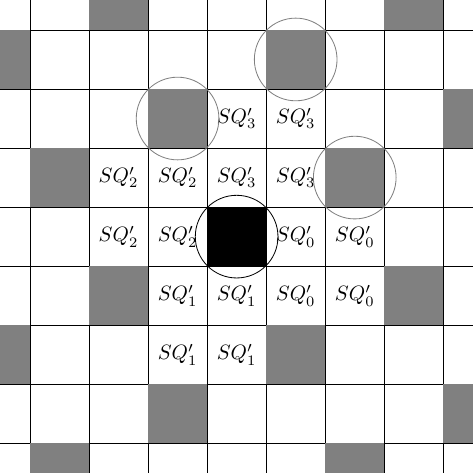}
				\caption{Pattern guards for $SQ'_3$} 
				\label{fig:patternSQ3'}
			\end{subfigure}
		\end{minipage}\\[1em]	
		\caption{Pattern guards for $D'_t$ unoccupied squares (circled)}
		\label{fig:pattern-guards'}
	\end{figure}

	To facilitate a more formal explanation, let us break down the guards' turn into a few distinct components.
	Of course, the guards are always assumed to move concurrently during their turn. 
	
	Initially, the guards are assumed to occupy a dominating set $D_t$ in $\mathcal{D}(G_\infty)$.
	Then, an attack occurs on a vertex in $V(G_\infty) \setminus D_t$.
	To defend against it, the guards apply \emph{Rotate-Square}:
	
	\begin{itemize}
		\item[\textbf{(1)}] Identify the defence-responsible guard.
		\item[\textbf{(2)}] Identify the pattern square and the pattern guards.
		\item[\textbf{(3)}] The pattern guards slide along the sides of the pattern square.
		\item[\textbf{(4)}] Repeat the rotation pattern in horizontal and vertical lanes in hops of distance five.
	\end{itemize}
	
	Let us examine each of these strategy components more carefully.
	
	Step (1) requires looking at the grid and spotting the guard, which lies on a cell adjacent to the attacked cell.
	The four unoccupied squares around the defence-responsible guard are identified as in Figure~\ref{fig:sq}.
	
	In step (2), the pattern square is identified as the unoccupied square along whose side the defence-responsible guard has to slide to defend the attack. 
	The four guards adjacent to cells of the pattern square are identified as the pattern guards. 
	
	In step (3), each of the pattern guards (including the defence-responsible guard) slides along a side of the pattern square.
	For an example, see Figure~\ref{fig:step3}: the defence-responsible guard in cell $(x,y)$ (in black) defends against an attack on the bottom-right cell of $SQ_3(x,y)$ by sliding along $SQ_0(x,y)$.
	Then, the other three guards around $SQ_0(x,y)$ (in gray) slide along a side of $SQ_0(x,y)$ as well.
	The latter happens in order to preserve that the pattern square $SQ_0(x,y)$ remains unoccupied.
	
	\begin{figure}
		\centering
		\begin{minipage}{\linewidth}
			\begin{subfigure}{\textwidth}
				\centering
				\includegraphics[scale = 0.75]{sq.pdf}
				\caption{Unoccupied squares around $(x, y) \in D_t$; $(x,y)$ in black} 
				\label{fig:step3a}
			\end{subfigure}
		\end{minipage} \\[1.5em]
		\begin{minipage}{\linewidth}
			\begin{subfigure}{\textwidth}
				\centering
				\includegraphics[scale = 0.75]{attack-sq3.pdf}
				\caption{Attack on bottom-right cell of $SQ_3(x,y)$: the defence-responsible guard in cell $(x,y)$ (in black) has to slide along a side of $SQ_0(x,y)$, thus $SQ_0(x,y)$ is identified as the pattern square for the next guards' turn} 
				\label{fig:step3b}
			\end{subfigure}
		\end{minipage} \\[1.5em]
		\begin{minipage}{\linewidth}
			\begin{subfigure}{\textwidth}
				\centering
				\includegraphics[scale = 0.75]{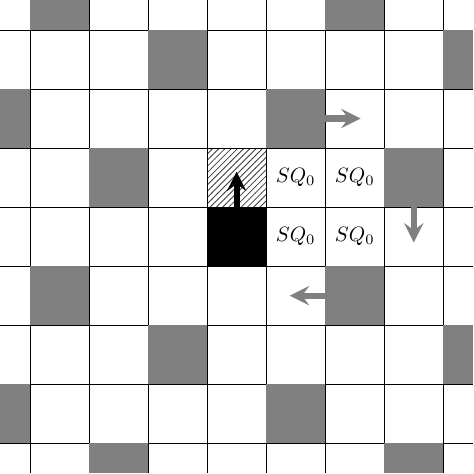}       
				\caption{Pattern guards slide along sides of pattern square $SQ_0(x,y)$; arrows indicate the corresponding directions}
				\label{fig:step3c}
			\end{subfigure}
		\end{minipage}
		\caption{An example for Step (3) of Rotate-Square}
		\label{fig:step3}
	\end{figure}
	
	Eventually, in step (4), the pattern square is used as a guide for the move of the rest of the guards in $D_t$. 
	Consider a pattern guard initially lying on vertex $(w,z)$.
	By construction of $D_t$, guards lie on all vertices $(w \pm 5\alpha, z \pm 5\beta)$ for $\alpha, \beta \in \mathbb{N}$, since adding multiples of $5$ in both dimensions does not affect membership in $D_t$ by definition of $f(\cdot)$. We refer to the set $\{(w \pm 5\alpha, z \pm 5\beta): \alpha, \beta \in \mathbb{N} \}$ as the \emph{cousins} of $(w,z)$.
	Each pattern guard, in step (3), slides along a side of the pattern square. 
	His move defines a direction on the grid: up, down, left, right.
	For each pattern guard $(w,z)$, the strategy of his cousins reduces to taking a step in the same direction.
	The rest of the guards, i.e., guards who are not cousins to any pattern guard, do not move during this turn, i.e., will remain in their pre-attack location after the attack.
	From now on, we refer to these guards as \emph{stationary guards}.
	We provide an example execution of step (4) in Figure~\ref{fig:step4}. 
	The original pattern guards are given in black.
	The circles enclose the repetitions of the pattern guards' move by their cousins, in gray. 
	Guards outside a circle do not move during this turn, i.e., they remain in their pre-attack location after the attack.
	
	\begin{figure}
		\centering
		\includegraphics[scale = 0.75]{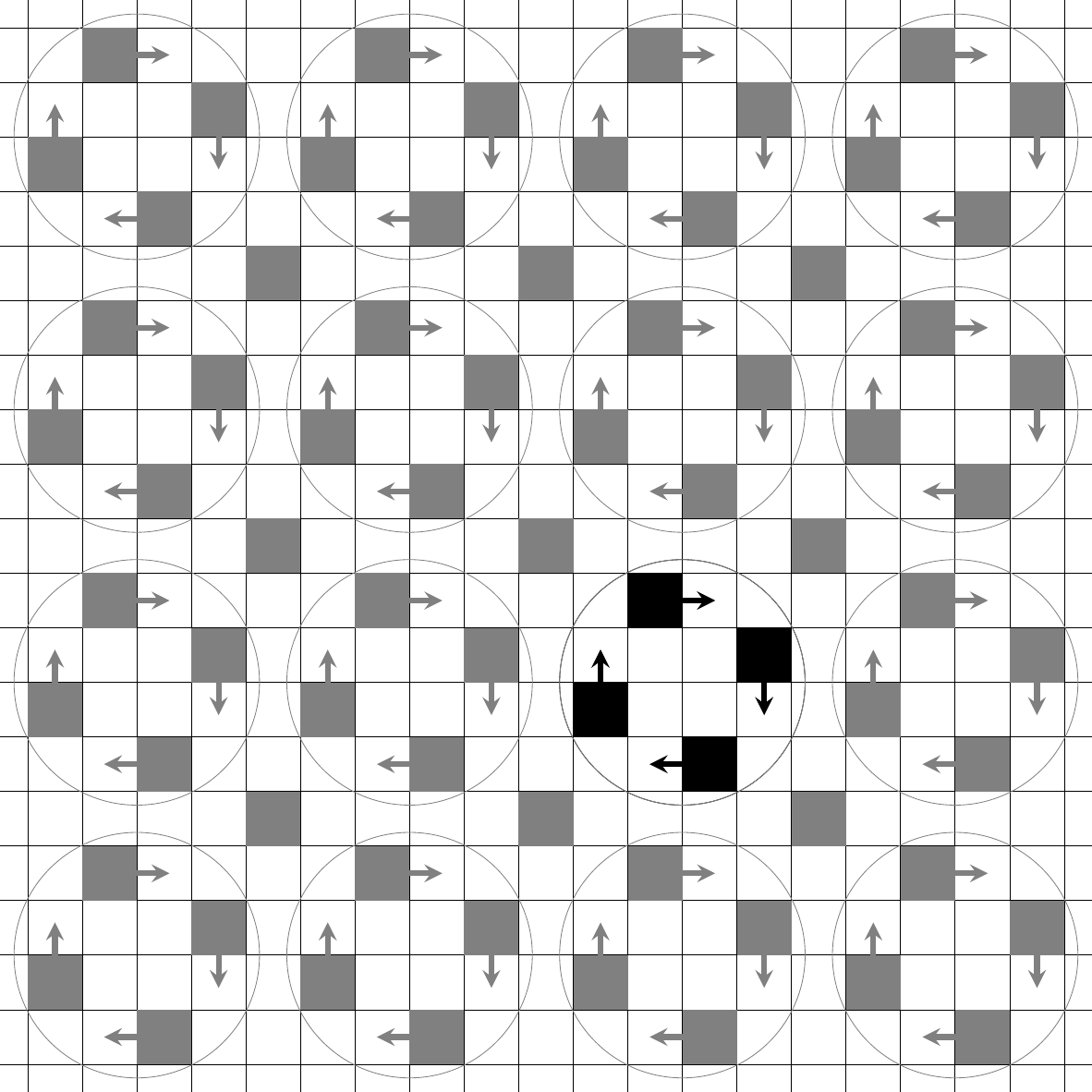}       
		\caption{An example execution for Step (4) of Rotate-Square: pattern guards are given in black}
		\label{fig:step4}
	\end{figure}
	
	\begin{lemma}\label{lem:step}
		Assume the guards occupy a dominating placement $D \subseteq V(G_\infty)$ in $\mathcal{D}(G_\infty) \cup \mathcal{D}'(G_\infty)$ and
		an attack occurs on a vertex in $V(G_\infty) \setminus D$.
		After applying the Rotate-Square strategy, the guards successfully defend against the attack and again form a dominating set in $\mathcal{D}(G_\infty) \cup \mathcal{D}'(G_\infty)$.
	\end{lemma}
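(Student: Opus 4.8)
The plan is to show the lemma by a case analysis on the form of the current dominating set (whether $D \in \mathcal{D}(G_\infty)$ or $D \in \mathcal{D}'(G_\infty)$) and, within each case, on which of the four empty squares around the defence-responsible guard contains the attacked node. First I would fix notation: let $g$ be the defence-responsible guard at node $(x,y) \in D$, which by perfectness of the domination (noted after Proposition~\ref{prop:D_t}) is the unique guard adjacent to the attacked node $a$. Say $a$ lies in $SQ_i$; then by the sliding observation from the ``Empty Squares'' subsection, $g$ moves onto $a$ and in doing so slides along an edge of the pattern square $SQ_j$ with $j = (i+1)\bmod 4$ (or $j=(i-1)\bmod 4$ in the primed case). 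The three other guards protecting $SQ_j$ take the matching clockwise (resp.\ counter-clockwise) step along $SQ_j$-edges, and step (4) propagates this to all translates by $(5\alpha,5\beta)$.

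The core claim to verify is: \emph{after the move, the new guard placement equals $D'_{t'}$ (or $D_{t'}$) for an explicitly computable $t'$.} I would establish this by observing that the four guards around $SQ_j$, after a quarter-turn rotation, occupy exactly the four positions that a guard-quadruple of the ``opposite-type'' dominating set assigns around that same square (this is precisely the remark in the text that the $D'_t$ formation ``looks like a clockwise step'' of the $D_t$ formation, and vice versa). Because the move is a rigid rotation of an empty $2\times2$ square, the four moved guards still surround an empty square of the other type; because step (4) applies the identical translation-periodic move to every translate by multiples of $5$ in each coordinate, the global image is again translation-periodic with the same periods and is therefore one of the sets $D'_{t'}$ (resp.\ $D_{t'}$). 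Concretely I would compute $t'$ in terms of $t$, $i$, and the direction: since each of the four directions a guard can slide shifts the value of $f(\cdot)$ (or $f'(\cdot)$) by a fixed element of $\mathbb{Z}_5$, and since the rotation sends an $f$-level set to an $f'$-level set, one reads off $t'$ directly from the tables implicit in Figures~\ref{fig:sq} and~\ref{fig:sq1}. Finally, domination of $G_\infty$ by the new placement is immediate from Proposition~\ref{prop:D_t}, and ``successfully defends'' holds because $a$ is now occupied.

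I expect the main obstacle to be bookkeeping rather than conceptual: one must check that the guards moved in step (4) — the translates of the four $SQ_j$-guards — never collide with each other or with the stationary guards, and that no node is left undominated during the transition (the lemma only asks about the end state, so the relevant point is really just that the final multiset of positions has no repeats and equals a genuine $D_{t'}$/$D'_{t'}$). Collision-freeness follows from the fact that the moved guards form a full sublattice of the placement (all nodes congruent mod $5$ to one of four positions around $SQ_j$), they all translate by the same vector, and the stationary guards occupy the complementary congruence classes; so the image sublattice lands exactly on the congruence classes vacated plus possibly new ones, and a short mod-$5$ computation shows the union is exactly a single $f$- or $f'$-level set. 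I would organize this as: (a) reduce to a single ``local'' statement about one pattern square via the translation-periodicity; (b) prove the local statement by exhausting the (essentially two, up to the symmetry $f \leftrightarrow f'$ and up to which $SQ_i$ is attacked) cases, each a direct coordinate check; (c) invoke Proposition~\ref{prop:D_t} to conclude. The only genuinely delicate point is making sure the $(i\pm1)\bmod 4$ rule together with the chosen rotation direction consistently lands in $\mathcal{D}(G_\infty)\cup\mathcal{D}'(G_\infty)$ for \emph{all} four possible attacked squares and not just the one drawn in Figure~\ref{fig:pattern}; I would handle the remaining three by the $90^\circ$ rotational symmetry of the whole configuration together with the parity swap between $D$-type and $D'$-type sets.
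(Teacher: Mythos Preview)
Your proposal is correct and follows essentially the same route as the paper: a case analysis on the attack direction, followed by a direct mod-$5$ coordinate check that the four rotated pattern-square guards together with the stationary ``model'' guards all share a common $f'$-value (resp.\ $f$-value), so the new placement is some $D'_{t'}$ (resp.\ $D_{t'}$). The paper simply writes out all eight cases in explicit tables, whereas you propose to reduce to one or two representative cases via the $90^\circ$ rotational symmetry (which sends $D_t$ to $D_{2t}$ and cyclically permutes the $SQ_i$, and is indeed equivariant for the Rotate-Square rule) together with the diagonal reflection swapping $f\leftrightarrow f'$; this is a legitimate shortcut but you should be aware that verifying the strategy is equivariant under these symmetries is itself a small computation, so the saving over the paper's brute-force tables is modest. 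One phrasing to tighten: the moved guards do \emph{not} ``all translate by the same vector'' --- the four congruence classes around $SQ_j$ move in four different directions --- so your collision-freeness argument should be stated per class, which is exactly what the paper's tables encode.
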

	
	\begin{proof}
		In this proof, we are going to demonstrate that any of the four possible attacks (one per unoccupied square) around a vertex in a $D_t$ (or $D'_t$) placement can be defended by Rotate-Square and, most importantly,
		the guards still occupy a placement in $\mathcal{D}(G_\infty) \cup \mathcal{D}'(G_\infty)$ after their turn.
		Below, in Figure~\ref{fig:DtSQ3}, we provide pictorial details for one out of eight cases, four for $D_t$ and four for $D'_t$. We need not care about the value of $t$, 
		since all $D_t$, respectively $D'_t$, placements are mere shifts of each other.
		The defence-responsible guard is shown in black, while the rest is depicted in gray. 
		Also, notice that the pattern guards' cousins occupy positions $(w \pm 5\alpha, z \pm 5\beta)$ for $\alpha, \beta \in \mathbb{N}$, where $(w,z)$ is the new position of a pattern square guard.
		Then, $f(w, z) = f(w \pm 5\alpha, z \pm 5\beta)$ and $f'(w, z) = f'(w \pm 5\beta, z \pm 5\alpha)$, since the modulo $5$ operation cancels out the addition (subtraction) of $5\alpha$ and $5\beta$.
		A similar observation holds for stationary guards:
		we identify a model guard, say on position $(a,b)$, and then the rest of such guards are given by $(a \pm 5\alpha, b \pm 5\beta)$.
		Again, the $f(\cdot)$, respectively $f'(\cdot)$, values of all these vertices remain equal.
		For this reason, we focus below only on the pattern guards and one stationary guard and demonstrate that they share the same value of $f(\cdot)$, respectively $f'(\cdot)$. 
		
		We hereby consider a potential attack around a vertex $(x, y) \in D_t$.
		\begin{paragraph}{Attack on $(x-1, y)$ (i.e., on $SQ_3(x,y)$).}
			We apply Rotate-Square around $SQ_0(x, y)$.
			The four guards around $SQ_0(x, y)$ and the model stationary guard move as follows (Figure~\ref{fig:DtSQ3}):
			
			Let $P$ stand for the set of new positions given in Table~\ref{tab:table1}.
			The guards now occupy cells $(w,z) \in P$ where $f'(w, z) = 2x + y - 2 \;(\bmod\; 5) = 2x + y +3 \;(\bmod\; 5) = t'$.
			By this fact, we get $P \subseteq D'_{t'}$.
			Now, assume there exists a cell $(w,z) \notin P$, but $(w, z) \in D'_{t'}$.
			Without loss of generality, we assume $w \in [x-3, x+1]$ and $z \in [y-1, y+3]$, 
			since the configuration of the guards in this window is copied all over the grid by the symmetry of $D_t$ or $D'_t$ placements.
			Since $(w,z) \notin P$, this is a cell with no guard on it.
			However, by construction, any such cell is dominated by an adjacent vertex $(w_1, z_1)$ with $f'(w_1, z_1) = t'$.
			Then, by assumption, $f'(w,z) = f'(w_1, z_1) = t'$, which is a contradiction because, by definition of $f'(\cdot)$,
			two adjacent cells never have equal values. 
		\end{paragraph}
		
		All other cases can be proved in a similar fashion; for the details of each case, see Tables~\ref{tab:table2}--\ref{tab:table8}.
		Notice that an attack against a $D_t$ placement leads to a $D'_{t'}$ placement for some $t'$ and vice versa.
	\end{proof}
	
	\begin{theorem}\label{thm:INF}
		The guards eternally dominate $G_\infty$ by following the Rotate-Square strategy starting from an initial dominating set in $\mathcal{D}(G_\infty) \cup \mathcal{D}'(G_\infty)$.
	\end{theorem}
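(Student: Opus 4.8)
The plan is to run a straightforward induction on the round number, using Lemma~\ref{lem:step} as the inductive engine and Proposition~\ref{prop:D_t} to translate membership in $\mathcal{D}(G_\infty) \cup \mathcal{D}'(G_\infty)$ into the domination property. Formally, I would prove by induction on $k \ge 0$ the invariant: after round $k$, the guards occupy some placement $D^{(k)} \in \mathcal{D}(G_\infty) \cup \mathcal{D}'(G_\infty)$ (and hence, by Proposition~\ref{prop:D_t}, a dominating set of $G_\infty$). The base case $k = 0$ is exactly the hypothesis of the theorem, namely that the initial placement $D^{(0)}$ lies in $\mathcal{D}(G_\infty) \cup \mathcal{D}'(G_\infty)$.

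For the inductive step, assume $D^{(k)} \in \mathcal{D}(G_\infty) \cup \mathcal{D}'(G_\infty)$. Since every $D_t$ (respectively $D'_t$) is a proper subset of $V(G_\infty)$, there is always an unguarded node, so in round $k+1$ the rioter attacks some node $v \in V(G_\infty) \setminus D^{(k)}$. I would then observe that Rotate-Square is well-defined in this configuration: by the perfectness of the dominating sets in $\mathcal{D}(G_\infty) \cup \mathcal{D}'(G_\infty)$ (noted right after Proposition~\ref{prop:D_t}), exactly one guard is adjacent to $v$, so step (1) of the strategy picks out a unique defence-responsible guard; steps (2)--(4) are then determined by the empty-square structure described in Section~\ref{sec:inf}, irrespective of whether $D^{(k)}$ is of type $D_t$ or $D'_t$. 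Applying Lemma~\ref{lem:step} to $D = D^{(k)}$ and the attack on $v$ yields that the attack is successfully defended and the resulting placement $D^{(k+1)}$ again lies in $\mathcal{D}(G_\infty) \cup \mathcal{D}'(G_\infty)$; by Proposition~\ref{prop:D_t} it is dominating. This closes the induction. (One may additionally record the observation from the proof of Lemma~\ref{lem:step} that the type alternates, i.e.\ a $D_t$-placement becomes a $D'_{t'}$-placement and vice versa, but this is not needed for correctness, since the lemma covers both starting types uniformly.)

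Finally I would conclude: the invariant holds for every $k \ge 0$, so for any (infinite) sequence of attacks chosen by the rioter the guards always maintain a dominating set, which is precisely the winning condition for the guards; hence they perpetually dominate $G_\infty$.

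I do not expect a genuine obstacle here, since all the combinatorial work has already been discharged in Lemma~\ref{lem:step}. The only points requiring a sentence of care are (i) that an unguarded node always exists so the rioter's move is legal, (ii) that the defence-responsible guard is unique (perfect domination), so Rotate-Square is unambiguous, and (iii) that the argument is genuinely uniform over both families $\mathcal{D}(G_\infty)$ and $\mathcal{D}'(G_\infty)$, so the induction does not break when the placement type switches between rounds.
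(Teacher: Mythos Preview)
Your proposal is correct and matches the paper's own proof essentially line for line: both run an induction on the round number with Lemma~\ref{lem:step} as the inductive step, maintaining the invariant that the current placement lies in $\mathcal{D}(G_\infty) \cup \mathcal{D}'(G_\infty)$. Your write-up is in fact slightly more careful than the paper's (you explicitly check well-definedness of the strategy and invoke Proposition~\ref{prop:D_t} for domination), but the underlying argument is the same.
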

	\begin{proof}
		We prove by induction that the guards defend against any number of attacks and always maintain a placement in $\mathcal{D}(G_\infty) \cup \mathcal{D}'(G_\infty)$ after their turn.
		
		In the first step, the guards apply Rotate-Square and by Lemma~\ref{lem:step} they successfully defend against the first attack and now form another dominating set in $\mathcal{D}(G_\infty) \cup \mathcal{D}'(G_\infty)$.
		
		Assume that $i$ attacks have occurred and the guards have successfully defended against all of them by following Rotate-Square.
		That is, they occupy a configuration in $\mathcal{D}(G_\infty) \cup \mathcal{D}'(G_\infty)$.
		The $(i+1)$-st attack now occurs and the guards again follow Rotate-Square and therefore defend against the attack and form another dominating set in $\mathcal{D}(G_\infty) \cup \mathcal{D}'(G_\infty)$ (by Lemma~\ref{lem:step}).
	\end{proof}
	
	
	
	\begin{minipage}{\linewidth}
		
		
		\begin{table}[H]
			\centering
			\scalebox{0.9}{
				\begin{tabular}{| c |c | c | c|}
					\hline
					Guard & Old Position $(w, z)$ & New Position $(w',z')$ & $f'(w', z')$ $(\bmod\; 5)$ \\
					\hline
					defence-responsible &$(x,y)$ & $(x-1, y)$ & $2x + y - 2$\\
					pattern &$(x-2, y+1)$ & $(x-2, y+2)$ & $2x + y - 2$\\
					pattern &$(x-1,y +3)$ & $(x, y+3)$ & $2x + y + 3$\\
					pattern &$(x+1, y+2)$ & $(x+1, y+1)$ & $2x + y + 3$\\
					\hline
					\hline
					stationary &$(x-3, y-1)$ & $(x-3, y-1)$ & $2x + y - 2$\\ 
					\hline
				\end{tabular}}
				\caption{Attack on $(x-1, y)$ (rotate around $SQ_0(x, y)$); Figure~\ref{fig:DtSQ3}}
				\label{tab:table1} 
			\end{table}
			
			
			\begin{figure}[H]
				\centering
				\begin{minipage}{0.45\linewidth}
					\begin{subfigure}{\textwidth}
						\centering
						\includegraphics[scale = 0.65]{attack-sq3.pdf}
						\caption{Attack on bottom-right cell of $SQ_3(x, y)$} 
						\label{fig:DtSQ3-1}
					\end{subfigure}
				\end{minipage}
				\begin{minipage}{0.45\linewidth}
					\begin{subfigure}{\textwidth}
						\centering
						\includegraphics[scale = 0.65]{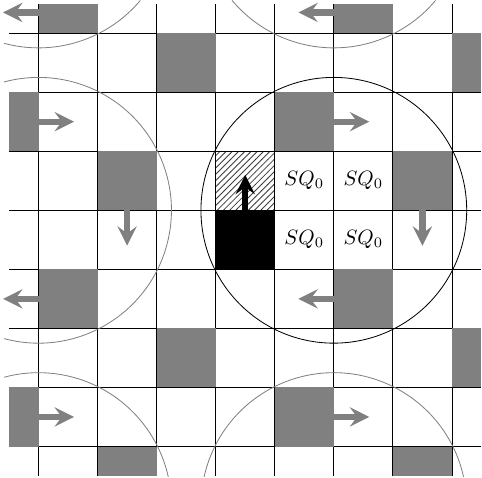}       
						\caption{Rotate-Square with $SQ_0(x, y)$ pattern square}
						\label{fig:DtSQ3-2}
					\end{subfigure}
				\end{minipage}
				\caption{Defending against an attack on $SQ_3(x, y)$}
				\label{fig:DtSQ3}
			\end{figure}
			
			
			\vspace{-2.5mm} %
			
			\begin{table}[H]
				\centering
				\scalebox{0.9}{
					\begin{tabular}{| c |c | c | c|}
						\hline
						Guard & Old Position $(w, z)$ & New Position $(w',z')$ & $f'(w', z')$ $(\bmod\; 5)$ \\
						\hline
						defence-responsible &$(x,y)$ & $(x, y-1)$ & $2x + y - 1$\\
						pattern &$(x-1, y-2)$ & $(x-2, y-2)$ & $2x + y - 1$\\
						pattern &$(x-3,y-1)$ & $(x-3, y)$ & $2x + y - 1$\\
						pattern &$(x-2, y+1)$ & $(x-1, y+1)$ & $2x + y - 1$\\
						\hline
						\hline
						stationary &$(x+1, y+2)$ & $(x+1, y+2)$ & $2x + y + 4$\\ 
						\hline
					\end{tabular}}
					\caption{Attack on $(x, y-1)$ (rotate around $SQ_3(x, y)$); Figure~\ref{fig:DtSQ2}}
					\label{tab:table2}
				\end{table}
				
				
				\begin{figure}[H]
					\centering
					\begin{minipage}{0.45\linewidth}
						\begin{subfigure}{\textwidth}
							\centering
							\includegraphics[scale = 0.65]{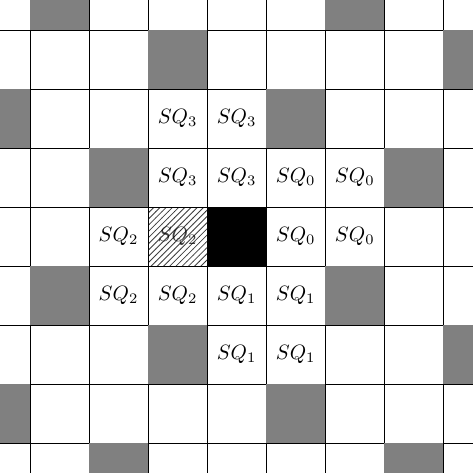}
							\caption{Attack on top-right cell of $SQ_2(x, y)$} 
							\label{fig:DtSQ2-1}
						\end{subfigure}
					\end{minipage}
					\begin{minipage}{0.45\linewidth}
						\begin{subfigure}{\textwidth}
							\centering
							\includegraphics[scale = 0.65]{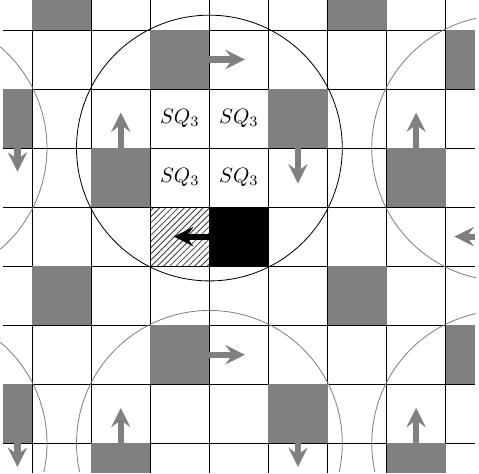}       
							\caption{Rotate-Square with $SQ_3(x, y)$ pattern square}
							\label{fig:DtSQ2-2}
						\end{subfigure}
					\end{minipage}
					\caption{Defending against an attack on $SQ_2(x, y)$}
					\label{fig:DtSQ2}
				\end{figure}
				
			\end{minipage}
			
			\begin{minipage}{\linewidth}
				
				
				\begin{table}[H]
					\centering
					\scalebox{0.9}{
						\begin{tabular}{| c |c | c | c|}
							\hline
							Guard & Old Position $(w, z)$ & New Position $(w',z')$ & $f'(w', z')$ $(\bmod\; 5)$ \\
							\hline
							defence-responsible & $(x,y)$ & $(x+1, y)$ & $2x + y + 2$\\
							pattern &$(x+2, y-1)$ & $(x+2, y-2)$ & $2x + y + 2$\\
							pattern &$(x+1,y-3)$ & $(x, y-3)$ & $2x + y - 3$\\
							pattern &$(x-1, y-2)$ & $(x-1, y-1)$ & $2x + y - 3$\\
							\hline
							\hline
							stationary &$(x-2, y+1)$ & $(x-2, y+1)$ & $2x + y - 3$\\ 
							\hline
						\end{tabular}}
						\caption{Attack on $(x+1, y)$ (rotate around $SQ_2(x, y)$); Figure~\ref{fig:DtSQ1}}
					\end{table}
					
					
					\begin{figure}[H]
						\centering
						\begin{minipage}{0.45\linewidth}
							\begin{subfigure}{\textwidth}
								\centering
								\includegraphics[scale = 0.65]{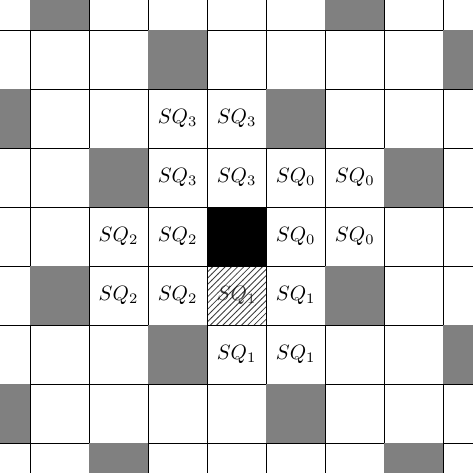} 
								\caption{Attack on top-left cell of $SQ_1(x, y)$} 
								\label{fig:DtSQ1-1}
							\end{subfigure}
						\end{minipage}
						\begin{minipage}{0.45\linewidth}
							\begin{subfigure}{\textwidth}
								\centering
								\includegraphics[scale = 0.65]{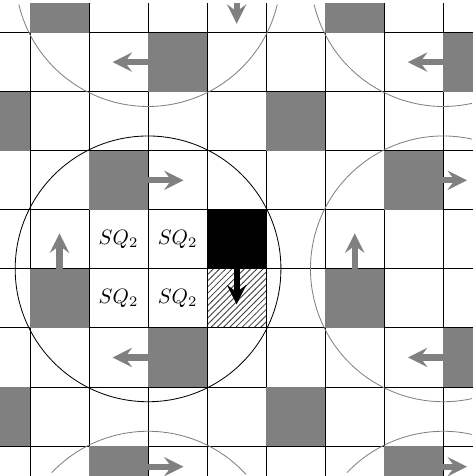}  
								\caption{Rotate-Square with $SQ_2(x, y)$ pattern square}
								\label{fig:DtSQ1-2}
							\end{subfigure}
						\end{minipage}
						\caption{Defending against an attack on $SQ_1(x, y)$}
						\label{fig:DtSQ1}
					\end{figure}
					
					
					\vspace{-2.5mm} %
					
					\begin{table}[H]
						\centering
						\scalebox{0.9}{
							\begin{tabular}{| c |c | c | c|}
								\hline
								Guard & Old Position $(w, z)$ & New Position $(w',z')$ & $f'(w', z')$ $(\bmod\; 5)$ \\
								\hline
								defence-responsible &$(x,y)$ & $(x, y+1)$ & $2x + y + 1$\\
								pattern &$(x+1, y+2)$ & $(x+2, y+2)$ & $2x + y + 1$\\
								pattern &$(x+3,y+1)$ & $(x+3, y)$ & $2x + y + 1$\\
								pattern &$(x+2, y-1)$ & $(x+1, y-1)$ & $2x + y + 1$\\
								\hline
								\hline
								stationary &$(x-1, y+3)$ & $(x-1, y+3)$ & $2x + y + 1$\\ 
								\hline
							\end{tabular}}
							\caption{Attack on $(x, y+1)$ (rotate around $SQ_1(x, y)$); Figure~\ref{fig:DtSQ0}}
						\end{table}

						
						\begin{figure}[H]
							\centering
							\begin{minipage}{0.45\linewidth}
								\begin{subfigure}{\textwidth}
									\centering
									\includegraphics[scale = 0.65]{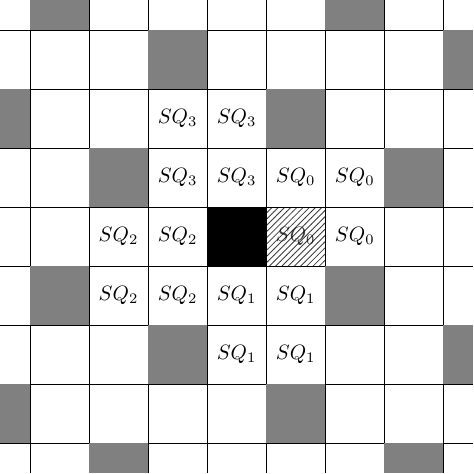} 
									\caption{Attack on bottom-left cell of $SQ_0(x, y)$} 
									\label{fig:DtSQ0-1}
								\end{subfigure}
							\end{minipage}
							\begin{minipage}{0.45\linewidth}
								\begin{subfigure}{\textwidth}
									\centering
									\includegraphics[scale = 0.65]{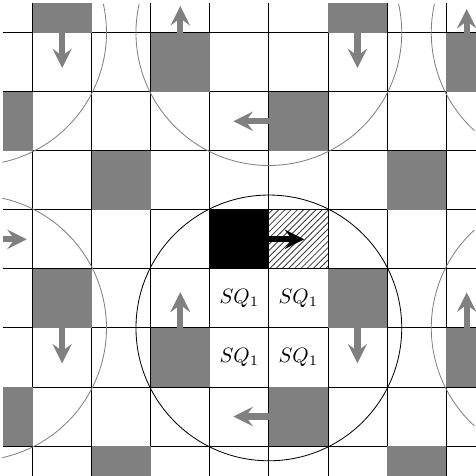}       
									\caption{Rotate-Square with $SQ_1(x, y)$ pattern square}
									\label{fig:DtSQ0-2}
								\end{subfigure}
							\end{minipage}
							\caption{Defending against an attack on $SQ_0(x, y)$}
							\label{fig:DtSQ0}
						\end{figure}
						
					\end{minipage}
					
					\begin{minipage}{\linewidth}

						
						
						\begin{table}[H]
							\centering
							\scalebox{0.9}{
								\begin{tabular}{| c |c | c | c|}
									\hline
									Guard & Old Position $(w, z)$ & New Position $(w',z')$ & $f(w', z')$ $(\bmod\; 5)$ \\
									\hline
									defence-responsible &$(x,y)$ & $(x-1, y)$ & $x + 2y - 1$\\
									pattern &$(x-2, y-1)$ & $(x-2, y-2)$ & $x + 2y - 1$\\
									pattern &$(x-1,y -3)$ & $(x, y-3)$ & $x + 2y -1$\\
									pattern &$(x+1, y-2)$ & $(x+1, y-1)$ & $x + 2y -1 $\\
									\hline
									\hline
									stationary &$(x-3, y+1)$ & $(x-3, y+1)$ & $x + 2y - 1$\\ 
									\hline
								\end{tabular}}
								\caption{Attack on $(x-1, y)$ (rotate around $SQ'_2(x, y)$); Figure~\ref{fig:DtSQ'3}}
							\end{table}
							
							
							\begin{figure}[H]
								\centering
								\begin{minipage}{0.45\linewidth}
									\begin{subfigure}{\textwidth}
										\centering
										\includegraphics[scale = 0.65]{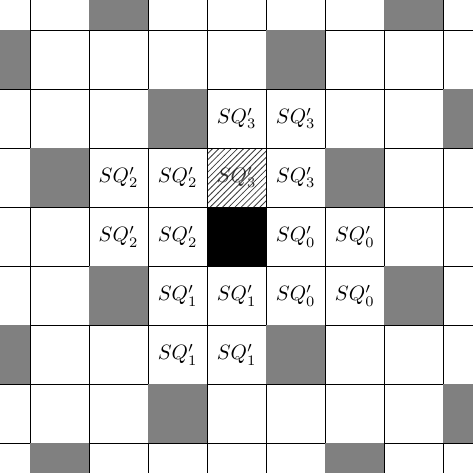}
										\caption{Attack on bottom-left cell of $SQ'_3(x, y)$} 
										\label{fig:DtSQ'3-1}
									\end{subfigure}
								\end{minipage}
								\begin{minipage}{0.45\linewidth}
									\begin{subfigure}{\textwidth}
										\centering
										\includegraphics[scale = 0.65]{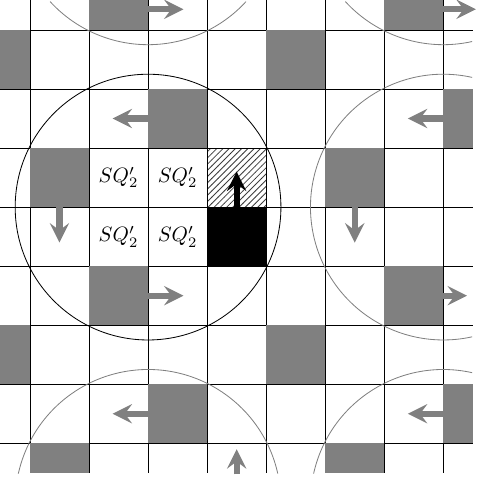}       
										\caption{Rotate-Square with $SQ'_2(x, y)$ pattern square}
										\label{fig:DtSQ'3-2}
									\end{subfigure}
								\end{minipage}
								\caption{Defending against an attack on $SQ'_3(x, y)$}
								\label{fig:DtSQ'3}
							\end{figure}
							
							
							\vspace{-2.5mm} %
							
							\begin{table}[H]
								\centering
								\scalebox{0.9}{
									\begin{tabular}{| c |c | c | c|}
										\hline
										Guard & Old Position $(w, z)$ & New Position $(w',z')$ & $f(w', z')$ $(\bmod\; 5)$ \\
										\hline
										defence-responsible &$(x,y)$ & $(x, y-1)$ & $x + 2y - 2$\\
										pattern &$(x+2, y+1)$ & $(x+1, y+1)$ & $x + 2y + 3$\\
										pattern &$(x+3,y-1)$ & $(x+3, y)$ & $x + 2y + 3$\\
										pattern &$(x+1, y-2)$ & $(x+2, y-2)$ & $x + 2y - 2$\\
										\hline
										\hline
										stationary &$(x-1, y+2)$ & $(x-1, y+2)$ & $x + 2y + 3$\\ 
										\hline
									\end{tabular}}
									\caption{Attack on $(x, y-1)$ (rotate around $SQ'_1(x, y)$); Figure~\ref{fig:DtSQ'2}}
								\end{table}
								
								
								\begin{figure}[H]
									\centering
									\begin{minipage}{0.45\linewidth}
										\begin{subfigure}{\textwidth}
											\centering
											\includegraphics[scale = 0.65]{attack-sq_2.pdf}
											\caption{Attack on bottom-right cell of $SQ'_2(x, y)$} 
											\label{fig:DtSQ'2-1}
										\end{subfigure}
									\end{minipage}
									\begin{minipage}{0.45\linewidth}
										\begin{subfigure}{\textwidth}
											\centering
											\includegraphics[scale = 0.65]{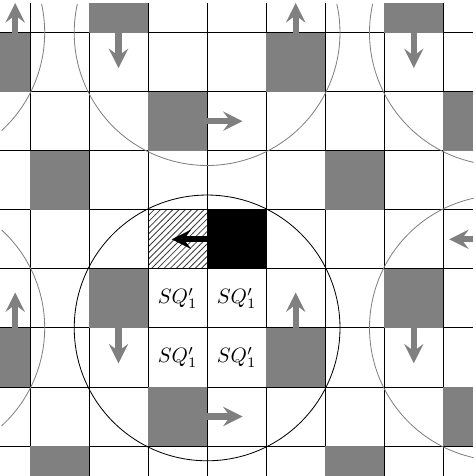}       
											\caption{Rotate-Square with $SQ'_1(x, y)$ pattern square}
											\label{fig:DtSQ'2-2}
										\end{subfigure}
									\end{minipage}
									\caption{Defending against an attack on $SQ'_2(x, y)$}
									\label{fig:DtSQ'2}
								\end{figure}
								
							\end{minipage}
							
							\begin{minipage}{\linewidth}
								
								
								\begin{table}[H]
									\centering
									\scalebox{0.9}{
										\begin{tabular}{| c |c | c | c|}
											\hline
											Guard & Old Position $(w, z)$ & New Position $(w',z')$ & $f(w', z')$ $(\bmod\; 5)$ \\
											\hline
											defence-responsible &$(x,y)$ & $(x+1, y)$ & $x + 2y + 1$\\
											pattern &$(x+2, y+1)$ & $(x+2, y+2)$ & $x + 2y + 1$\\
											pattern &$(x+1,y+3)$ & $(x, y+3)$ & $x + y + 1$\\
											pattern &$(x-1, y+2)$ & $(x-1, y+1)$ & $x + 2y + 1$\\
											\hline
											\hline
											stationary &$(x-2, y-1)$ & $(x-2, y-1)$ & $x + 2y - 4$\\ 
											\hline
										\end{tabular}}
										\caption{Attack on $(x+1, y)$ (rotate around $SQ'_0(x, y)$); Figure~\ref{fig:DtSQ'1}}
									\end{table}
									
									
									\begin{figure}[H]
										\centering
										\begin{minipage}{0.45\linewidth}
											\begin{subfigure}{\textwidth}
												\centering
												\includegraphics[scale = 0.65]{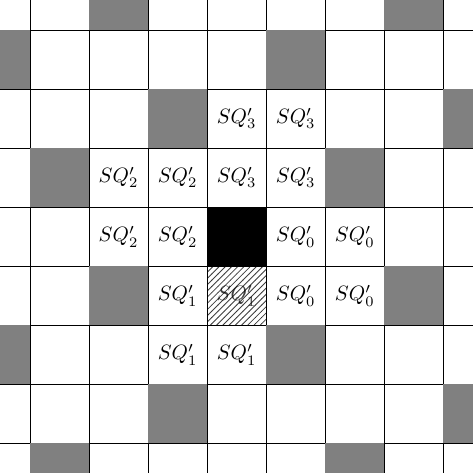}
												\caption{Attack on top-right cell of $SQ'_1(x, y)$} 
												\label{fig:DtSQ'1-1}
											\end{subfigure}
										\end{minipage}
										\begin{minipage}{0.45\linewidth}
											\begin{subfigure}{\textwidth}
												\centering
												\includegraphics[scale = 0.65]{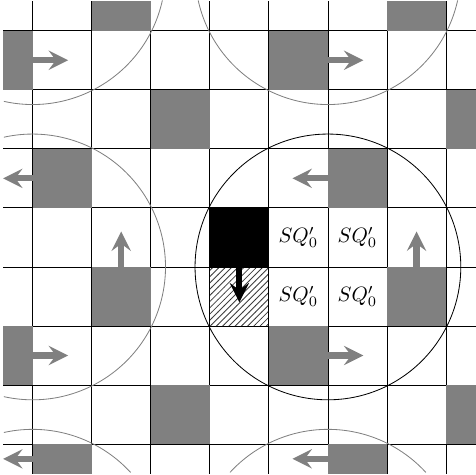}       
												\caption{Rotate-Square with $SQ'_0(x, y)$ pattern square}
												\label{fig:DtSQ'1-2}
											\end{subfigure}
										\end{minipage}
										\caption{Defending against an attack on $SQ'_1(x, y)$}
										\label{fig:DtSQ'1}
									\end{figure}

									
									\vspace{-2.5mm} %
									
									\begin{table}[H]
										\centering
										\scalebox{0.9}{
											\begin{tabular}{| c |c | c | c|}
												\hline
												Guard & Old Position $(w, z)$ & New Position $(w',z')$ & $f(w', z')$ $(\bmod\; 5)$ \\
												\hline
												defence-responsible &$(x,y)$ & $(x, y+1)$ & $x + 2y + 2$\\
												pattern &$(x-1, y+2)$ & $(x-2, y+2)$ & $x + 2y + 2$\\
												pattern &$(x-3,y+1)$ & $(x-3, y)$ & $x + 2y - 3$\\
												pattern &$(x-2, y-1)$ & $(x-1, y-1)$ & $x + 2y + -3$\\
												\hline
												\hline
												stationary &$(x+1, y+3)$ & $(x+1, y+3)$ & $x + 2y + 2$\\ 
												\hline
											\end{tabular}}
											\caption{Attack on $(x, y+1)$ (rotate around $SQ'_3(x, y)$); Figure~\ref{fig:DtSQ'0}}
											\label{tab:table8}
										\end{table}
										
										
										\begin{figure}[H]
											\centering
											\begin{minipage}{0.45\linewidth}
												\begin{subfigure}{\textwidth}
													\centering
													\includegraphics[scale = 0.65]{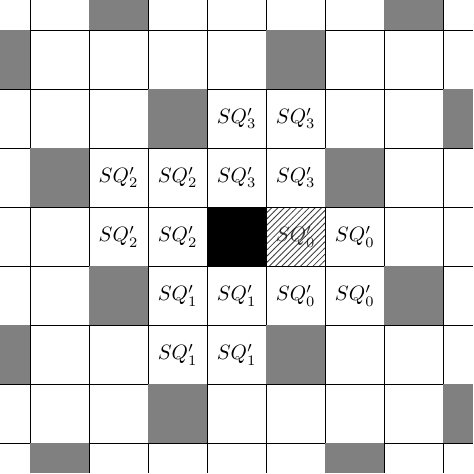}
													\caption{Attack on top-left cell of $SQ'_0(x, y)$} 
													\label{fig:DtSQ'0-1}
												\end{subfigure}
											\end{minipage}
											\begin{minipage}{0.45\linewidth}
												\begin{subfigure}{\textwidth}
													\centering
													\includegraphics[scale = 0.65]{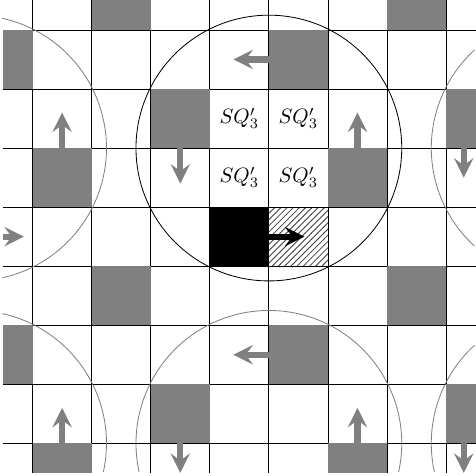}       
													\caption{Rotate-Square with $SQ'_3(x, y)$ pattern square}
													\label{fig:DtSQ'0-2}
												\end{subfigure}
											\end{minipage}
											\caption{Defending against an attack on $SQ'_0(x, y)$}
											\label{fig:DtSQ'0}
										\end{figure}
										
									\end{minipage}
									
									\section{Eternally Dominating Finite Grids}\label{sec:fin}
									We now apply the Rotate-Square strategy to finite grids, i.e., graphs of the form $P_{m}\Box P_{n}$.
									The initial idea is to follow the rules of the strategy, but to never leave any boundary or corner vertex unoccupied.
									A finite $m \times n$ grid consists of vertices $(i, j)$ where $i \in \{0, 1, 2, \ldots, m-1\}$ and $j \in \{0, 1, 2, \ldots, n-1\}$.
									Vertices $(0,x), (m-1, x), (y, 0), (y, n-1)$ for $x \in \{1, 2, \ldots, n-2\}$ and $y \in \{1, 2, \ldots, m-2\}$ are called \emph{boundary vertices}, while vertices $(0,0), (0, n-1), (m-1, 0), (m-1, n-1)$ are called \emph{corner vertices}.   
									Adjacencies are similar to the infinite grid case. 
									However, boundary vertices only have three neighbors, while corner vertices only have two.
									Let us consider $V(t) = D_t \cap (P_{m}\Box P_n)$ and $V'(t) = D'_t \cap (P_{m}\Box P_n)$, respectively. 
									We cite the following counting lemma from \cite{Chang}.
									
									\begin{lemma}[Lemma 2.2 \cite{Chang}]\label{lem:Vt}
										For all $t$, it holds $ \lfloor\frac{mn}{5}\rfloor \le |V(t)| \le \lceil\frac{mn}{5}\rceil $, and there exist $t_0, t_1$, such that
										$|V(t_0)| = \lfloor\frac{mn}{5}\rfloor \text{ and } |V(t_1)| = \lceil\frac{mn}{5}\rceil$.
									\end{lemma}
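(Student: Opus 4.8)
The plan is to combine a trivial partitioning identity with a ``balancing'' estimate showing that the five numbers $|V(0)|,\dots,|V(4)|$ pairwise differ by at most $1$.

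\textbf{Step 1 (the grid is partitioned).} Every node $(i,j)$ of $P_m\Box P_n$ has a single well-defined value $f(i,j)\in\mathbb{Z}_5$, so $V(0),\dots,V(4)$ form a partition of the $mn$ nodes of the grid. Hence $\sum_{t\in\mathbb{Z}_5}|V(t)| = mn$.

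\textbf{Step 2 (balancing).} I claim $\bigl||V(t)|-|V(t')|\bigr|\le 1$ for all $t,t'\in\mathbb{Z}_5$. Write $m = 5c+d$ and $n = 5a+b$ with $0\le b,d\le 4$. For a fixed row index $i\in\{0,\dots,m-1\}$, the count $|D_t\cap\{(i,j):0\le j\le n-1\}| = \#\{\,j\in[0,n):i+2j\equiv t\pmod 5\,\}$; since $j\mapsto 2j$ permutes $\mathbb{Z}_5$, each residue of $i+2j$ is attained exactly once in every block of five consecutive $j$'s, so this count equals $a$ or $a+1$, and it equals $a+1$ precisely when $(t-i)\bmod 5$ lies in the fixed $b$-element set $T_b := \{\,2\ell\bmod 5 : 0\le\ell<b\,\}$. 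Summing over $i$ and again grouping the indices $i$ into full blocks of five (each contributing exactly $|T_b| = b$) plus a residual block of length $d$, we obtain
\[
|V(t)| \;=\; ma + cb \;+\; h(t), \qquad h(t) := \#\bigl\{\, i\in[0,d) : (t-i)\bmod 5\in T_b \,\bigr\},
\]
so that $|V(t)|-|V(t')| = h(t)-h(t')$ and the constant $ma+cb$ is irrelevant. Here $h:\mathbb{Z}_5\to\mathbb{Z}_{\ge 0}$ is the cyclic convolution in $\mathbb{Z}_5$ of the indicator of $\{0,\dots,d-1\}$ with the indicator of $T_b = 2\cdot\{0,\dots,b-1\}$, and a direct inspection of the finitely many pairs $(b,d)\in\{0,\dots,4\}^2$ shows that in each case $h$ attains at most two values and those are consecutive integers, i.e.\ $\max_t h(t)-\min_t h(t)\le 1$. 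This proves the claim.

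\textbf{Step 3 (conclusion).} Write $mn = 5q+r$ with $0\le r\le 4$. By Step~1 the five values sum to $5q+r$, and by Step~2 they lie within $1$ of one another. If $r=0$ they must all equal $q = \lfloor mn/5\rfloor = \lceil mn/5\rceil$, and any choice $t_0 = t_1$ works. If $1\le r\le 4$, then five integers within $1$ of each other and summing to $5q+r$ must consist of exactly $r$ copies of $q+1$ and $5-r$ copies of $q$; hence every $|V(t)|$ lies in $\{\lfloor mn/5\rfloor,\lceil mn/5\rceil\}$ and both values are attained, supplying the required $t_0$ and $t_1$. The only place needing any care is the finite verification in Step~2 that the $\mathbb{Z}_5$-convolution $h$ never spreads over three distinct values; everything else is bookkeeping. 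The twin statement for $|V'(t)|$ is identical, since $f'(x,y) = f(y,x)$ turns it into the same count for $P_n\Box P_m$.
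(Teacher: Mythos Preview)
Your argument is correct and is essentially the approach the paper attributes to Chang: the paper does not give a full proof but only remarks that each column (you use rows, which is the same by symmetry) contains either $\lfloor m/5\rfloor$ or $\lfloor m/5\rfloor+1$ nodes of $D_t$, after which ``a case-analysis counting provides the above bounds''---exactly your Step~2 derivation of $|V(t)|=ma+cb+h(t)$ followed by the finite check on $h$. Your Step~3 pigeonhole wrap-up is the clean way to finish, and the remark on $V'(t)$ via $f'(x,y)=f(y,x)$ is precisely how the paper obtains Lemma~\ref{lem:V1t}.
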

									
									The main observation in the proof of the above lemma is that there exist either $\lfloor \frac{m}{5} \rfloor$ or $\lfloor \frac{m}{5} \rfloor + 1$ $D_t$-vertices in one column of a $P_{m}\Box P_n$ grid.
									Then, a case-analysis counting provides the above bounds.
									The same observation holds for $D'_t$, since $f'(\cdot)$ is defined based on the same function $f: \mathbb{Z}^2 \rightarrow \mathbb{Z}_5$. 
									Thence, we can extend the above lemma for $D'_t$ cases with a similar proof.
									
									\begin{lemma}\label{lem:V1t}
										For all $t$, it holds $ \lfloor\frac{mn}{5}\rfloor \le |V'(t)| \le \lceil\frac{mn}{5}\rceil $, and there exist $t_0, t_1$, such that
										$|V'(t_0)| = \lfloor\frac{mn}{5}\rfloor \text{ and } |V'(t_1)| = \lceil\frac{mn}{5}\rceil$.
									\end{lemma}
									
									In order to study the domination of $P_m \Box P_n$, the analysis is based on examining $V(t)$, but for an extended $P_{m+2} \Box P_{n+2}$ mesh.
									Indeed, Chang \cite{Chang} showed the following.
									
									\begin{lemma}[Theorem 2.2 \cite{Chang}]\label{lem:Chang}
										For any $m, n \ge 8$,  $\gamma_{m,n} \le \lfloor \frac{(m+2)(n+2)}{5} \rfloor - 4$.
									\end{lemma}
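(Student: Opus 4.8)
The plan is to use the extended-grid device already foreshadowed in the text: embed the $m\times n$ grid as the interior of an $(m+2)\times(n+2)$ board, dominate that interior with one of the perfect codes $D_t$, and then argue that four nodes of the enlarged board are redundant and can be discarded. Concretely, place $P_m\Box P_n$ at the positions $(i,j)$ with $1\le i\le m$, $1\le j\le n$ inside the mesh $P_{m+2}\Box P_{n+2}$ with coordinates $0\le i\le m+1$, $0\le j\le n+1$; call the $2(m+n+2)$ extra nodes the \emph{frame} and $(0,0),(0,n+1),(m+1,0),(m+1,n+1)$ its four \emph{corners}. First I would verify that $V(t)=D_t\cap(P_{m+2}\Box P_{n+2})$ dominates the interior $m\times n$ grid: by the case analysis in the proof of Proposition~\ref{prop:D_t}, every node of $G_\infty$ is dominated by a member of $D_t$ lying in its closed neighbourhood, and for an \emph{interior} node that closed neighbourhood lies entirely inside the extended mesh. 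Next I would note that each corner of the extended mesh has both of its neighbours on the frame, so it dominates \emph{no} interior node; hence if $c(t)\le 4$ corners lie in $D_t$, deleting them from $V(t)$ leaves a set of size $|V(t)|-c(t)$ that still dominates $P_m\Box P_n$. Finally I would invoke Lemma~\ref{lem:Vt} applied to the dimensions $m+2,n+2$: there is a residue $t_0$ with $|V(t_0)|=\lfloor\frac{(m+2)(n+2)}{5}\rfloor$, and if the four corners can be made to belong to that code we are done immediately.

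The delicate point — and the main obstacle — is that the minimizing residue need not contain all four corners. Writing $a\equiv 2(n+1)$ and $b\equiv m+1\pmod 5$, the four corner values of $f$ form the pattern $\{0,\,a,\,b,\,a+b\}\pmod 5$, which collapses to a single value only when $a\equiv b\equiv 0$, i.e.\ $m\equiv n\equiv 4\pmod 5$; for every other residue class at most two corners can lie in one $D_t$ (and the reflected code $f'$, treated via Lemma~\ref{lem:V1t}, exhibits the same phenomenon). So for most $(m\bmod 5,\,n\bmod 5)$ the naive ``delete four corners'' accounting falls short by one or two, and one has to recover the remaining savings by a bounded local surgery near the interior corners: in a small window around each corner of $P_m\Box P_n$ one replaces the restriction of $D_{t_0}$ by an alternative dominating pattern of that window which uses one fewer vertex, compensating for an absent frame corner. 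Selecting, for each residue class, the best among $f$ versus $f'$, among the five shifts $t$, and among these local patches is precisely where Chang's residue-by-residue count lives, and it is the only genuinely laborious part of the argument.

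Two remarks on why the hypotheses suffice. The counting Lemma~\ref{lem:Vt} is only exact once the board is reasonably large, and $m,n\ge 8$ makes the extended mesh at least $10\times10$, so the floor is attained and, simultaneously, the four corner windows are pairwise disjoint, so the local modifications do not interfere. Putting the pieces together, in every residue class one obtains a dominating set of $P_m\Box P_n$ of size at most $\lfloor\frac{(m+2)(n+2)}{5}\rfloor-4$, which is the claimed bound; I expect the bookkeeping of the local corner patterns (not any conceptual difficulty) to be the step that consumes the most work.
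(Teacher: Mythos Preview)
Your outline is close to the sketch the paper gives (this lemma is quoted from Chang's thesis and is only summarised here), but there is one genuine gap and one detour worth flagging.

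\textbf{The gap.} You define $V(t)=D_t\cap(P_{m+2}\Box P_{n+2})$ and then speak of ``a set of size $|V(t)|-c(t)$ that still dominates $P_m\Box P_n$''. But that set still contains all the non-corner frame vertices, which do not belong to $V(P_m\Box P_n)$; it is therefore not a dominating set \emph{of} $P_m\Box P_n$ at all. The missing step is exactly what the paper's sketch calls ``forcing into the boundary of $P_m\Box P_n$ the guards on the boundary of $P_{m+2}\Box P_{n+2}$'': each non-corner frame vertex $(0,j)$, $(m+1,j)$, $(i,0)$, $(i,n+1)$ has a unique interior neighbour, and you replace it by that neighbour. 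This map is injective (adjacent nodes never share an $f$-value, so you never collide with an existing $D_t$ node), and it preserves domination of the interior. Only after this push-in do you have a candidate subset of $V(P_m\Box P_n)$.

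\textbf{The detour.} You try to harvest the ``$-4$'' from the four frame corners lying in $D_{t_0}$, discover that this almost never happens, and then fall back on local surgery near the interior corners. In Chang's argument (as summarised in the paper) the $-4$ does not come from frame corners at all: one first pushes all frame guards in, and then observes that near each corner of $P_m\Box P_n$ the pushed-in configuration is locally redundant, so one guard per corner can be deleted outright. This avoids your residue-by-residue split on the corner values of $f$ and the juggling between $f$ and $f'$; the corner saving is uniform once the push-in has been done. Your eventual ``local surgery'' is the right mechanism, but you reach it only after an unnecessary case analysis.
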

									
									The result follows by picking an appropriate $D_t$ placement
									and forcing into the boundary of $P_m \Box P_n$ the guards on the boundary of $P_{m+2} \Box P_{n+2}$. 
									Moreover, Chang showed how to eliminate another four guards; one near each corner.
									
									Below, to facilitate the readability of our analysis, we focus on a specific subcase of finite grids.
									We demonstrate an m-eternal dominating strategy for $m \times n$ finite grids where $m \bmod 5 = n \bmod 5 = 2$ and then we improve upon it.
									Later, we extend to the general case.
									
									\subsection{A First Upper Bound: Full Boundary Cover}
									
									Initially, we place our guards on vertices belonging to $V(t) = D_t \cap (P_{m}\Box P_n)$ for some value $t \in \mathbb{Z}_5$.
									Unlike the approach in \cite{Chang}, we do not force inside any guards lying outside the boundary of $P_m \Box P_n$.
									Since a sequence of attacks may force the guards to any $V(t)$ or $V'(t)$ placement, i.e., for any value of $t$,
									we pick an initial guard placement, say $V(t_1)$, for which $|V(t_1)| = \lceil \frac{mn}{5} \rceil$ holds, to
									make sure that there are enough guards to maintain domination while transitioning from one placement to the next. 
									By Lemma~\ref{lem:Vt}, such a placement exists.
									Moreover, we cover the whole boundary by placing a guard on each unoccupied boundary or corner vertex.
									For an example, see Figure~\ref{fig:init}: black cells stand for guards which are members of a $D_t$ placement, whereas shaded cells denote the places where the extra guards are put.
									We refer to each of these added guards as a \emph{boundary guard}.
									This concludes the initial placement of the guards.
									
									The guards now follow Rotate-Square limited within the grid boundaries.
									For grid regions lying far from the boundary, Rotate-Square is applied in the same way as in the infinite grid case.
									For guard moves happening near the boundary or the corners, Rotate-Square's new placement demands can be satisfied by performing \emph{shifts of boundary guards}.
									In other words, a guard may need to step \emph{out of} the boundary, because he is (a cousin of) a pattern guard.
									Then, another guard steps \emph{into} the boundary to replace him, while the boundary guards between the into and out-of cells shift one step on the boundary.
									An example can be found in Figure~\ref{fig:shift-boundary-0}, where we partially view the area near the bottom-left corner of a finite grid. 
									The circles enclose repetitions of the pattern square movement.
									We focus our attention in the following two cases.
									\begin{itemize}
										\item Guard transitions within the boundary: 
										As indicated in the leftmost column in Figure~\ref{fig:shift-boundary-0}, to follow the pattern move, the two black guards have to move downward.
										However, since their movement does not force them outside the boundary and the boundary is fully occupied, there is no need to move in this case.
										We demonstrate this by removing the arrows in Figure~\ref{fig:shift-boundary-1}.
										
										\item Guard transitions into and out of the boundary:
										As indicated in the enclosed rectangle at the bottom-center in Figure~\ref{fig:shift-boundary-1}, following the pattern means a guard has to leave the boundary, whereas another has to enter it.
										To perform the pattern move, while maintaining a full boundary, we perform a shift of boundary guards as depicted in Figure~\ref{fig:shift}.
										Essentially, the three boundary guards between the two pattern guards shift one step to the left to both cover the unoccupied cell left by the pattern guard leaving the boundary and free a cell for the new position of the pattern guard entering the boundary.		
									\end{itemize}
									
									Overall, we refer to this slightly modified version of Rotate-Square as \emph{Finite Rotate-Square}.
									
									\begin{figure}
										\centering
										\includegraphics[scale = 0.55]{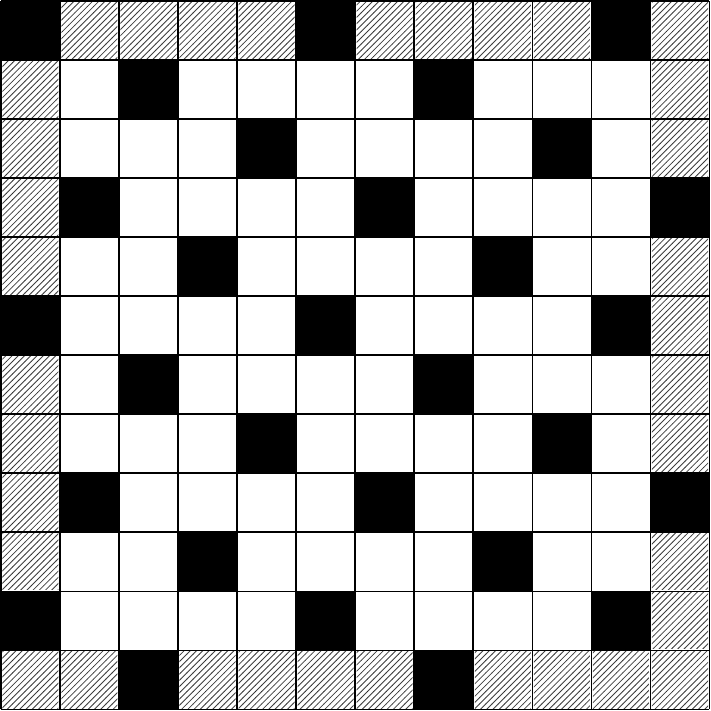} 
										\caption{An example of an initial $D_t$ placement for the guards in a $12\times12$ grid: black cells stand for $D_t$ guards, whereas shaded cells stand for extra boundary guards}
										\label{fig:init}
									\end{figure}
									
									\begin{figure}
										\centering
										\begin{minipage}{0.45\linewidth}
											\begin{subfigure}{\textwidth}
												\centering
												\includegraphics[scale = 0.65]{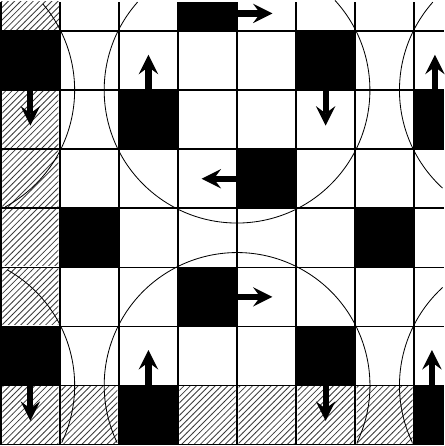}
												\caption{Circles enclosing pattern square repetitions near a corner of a finite grid} 
												\label{fig:shift-boundary-0}
											\end{subfigure}
										\end{minipage}
										\hfill
										\begin{minipage}{0.45\linewidth}
											\begin{subfigure}{\textwidth}
												\centering
												\includegraphics[scale = 0.65]{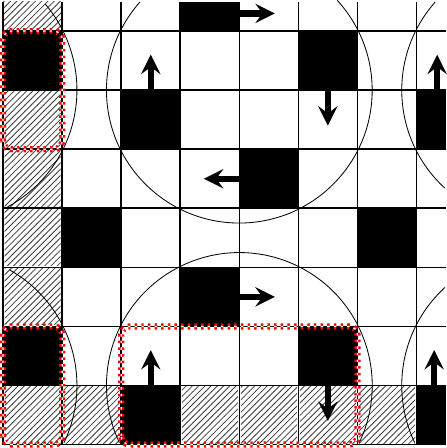}       
												\caption{Guard transitions within and into/ out of the boundary (within dashed rectangles)}
												\label{fig:shift-boundary-1}
											\end{subfigure}
										\end{minipage}
										\caption{An example of Finite Rotate-Square near the bottom-left corner of a finite grid}
										\label{fig:shift-boundary}
									\end{figure}
									
									\begin{figure}
										\centering
										\begin{minipage}{0.3\linewidth}
											\begin{subfigure}{\textwidth}
												\centering
												\includegraphics[scale = 0.6]{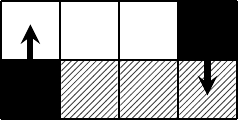}
												\caption{} 
												\label{fig:shift-1}
											\end{subfigure}
										\end{minipage}
										\begin{minipage}{0.3\linewidth}
											\begin{subfigure}{\textwidth}
												\centering
												\includegraphics[scale = 0.6]{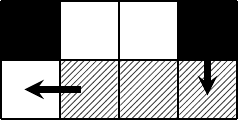}       
												\caption{}
												\label{fig:shift-2}
											\end{subfigure}
										\end{minipage}
										\begin{minipage}{0.3\linewidth}
											\begin{subfigure}{\textwidth}
												\centering
												\includegraphics[scale = 0.6]{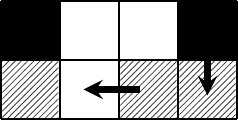}       
												\caption{}
												\label{fig:shift-3}
											\end{subfigure}
										\end{minipage}
										\\[1.5em]
										\begin{minipage}{0.3\linewidth}
											\begin{subfigure}{\textwidth}
												\centering
												\includegraphics[scale = 0.6]{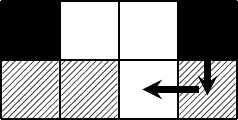}
												\caption{} 
												\label{fig:shift-4}
											\end{subfigure}
										\end{minipage}
										\begin{minipage}{0.3\linewidth}
											\begin{subfigure}{\textwidth}
												\centering
												\includegraphics[scale = 0.65]{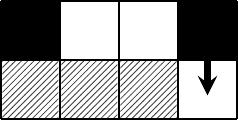}       
												\caption{}
												\label{fig:shift-5}
											\end{subfigure}
										\end{minipage}
										\begin{minipage}{0.3\linewidth}
											\begin{subfigure}{\textwidth}
												\centering
												\includegraphics[scale = 0.65]{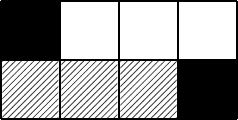}       
												\caption{}
												\label{fig:shift-6}
											\end{subfigure}
										\end{minipage}
										\caption{Demonstrating an example of boundary guards' shifting for the case of Figure~\ref{fig:shift-boundary-1}}
										\label{fig:shift}
									\end{figure}
									
									\begin{figure}
										\centering
										\begin{minipage}{0.45\textwidth}
											\begin{subfigure}{\textwidth}
												\hspace{-42pt}~\includegraphics[scale = 0.6]{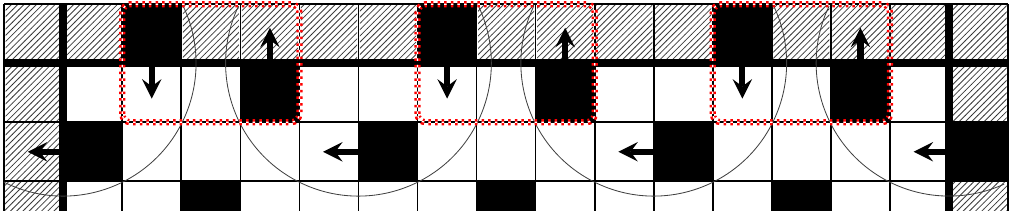}
												\caption{$\frac{n-2}{5} = 3$ pairs: guards of adjacent squares} 
												\label{fig:boundary-cut-0}
											\end{subfigure}
										\end{minipage}
										\\[0.5em]
										\begin{minipage}{0.45\textwidth}
											\begin{subfigure}{\textwidth}
												\hspace{-42pt}~\includegraphics[scale = 0.6]{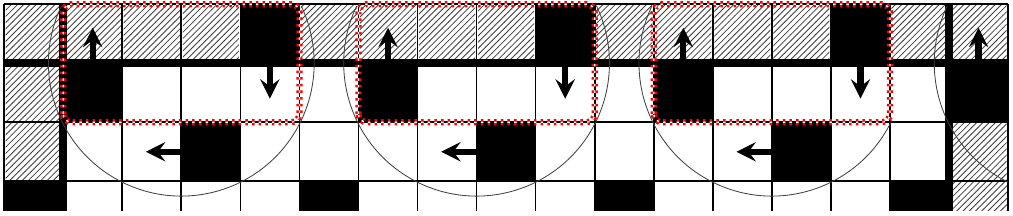}       
												\caption{$\frac{n-2}{5} = 3$ pairs: guards of same square}
												\label{fig:boundary-cut-1}
											\end{subfigure}
										\end{minipage}
										\caption{Examples of forming boundary-shifting pairs on the upper boundary of a grid ($n = 17$)}
										\label{fig:boundary-cut}
									\end{figure}
									
									\begin{lemma}\label{lem:finite-boundary}
										Assume $m \bmod 5 = n \bmod 5 = 2$ and that the guards follow Finite Rotate-Square, for an m-Eternal Domination game in $P_m \Box P_n$.
										Then, after every turn, their new placement $P$ is dominating, all boundary and corner vertices have a guard on them
										and, for some $t \in \mathbb{Z}_5$, there exists a set $V(t)$ (or $V'(t))$ such that $V(t) \subseteq P$ (or $V'(t) \subseteq P$).
									\end{lemma}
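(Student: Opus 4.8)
The plan is to prove the three conclusions of Lemma~\ref{lem:finite-boundary} by induction on the number of attacks, with the invariant being the conjunction of all three: after every turn the placement $P$ is dominating, every boundary and corner node carries a guard, and $P$ contains some $V(t)$ or $V'(t)$. The base case is exactly the initial placement described above: we start from $V(t_1)$ with $|V(t_1)| = \lceil \frac{mn}{5}\rceil$, which dominates $P_m \Box P_n$ since it is the restriction of a dominating set $D_{t_1}$ of $G_\infty$ (Proposition~\ref{prop:D_t}), and then we add a boundary guard on every boundary/corner node not already occupied; so all three conditions hold at round $0$.

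For the inductive step, assume the invariant holds before the $(i+1)$-st attack, so $V(t) \subseteq P$ for some $t$ (the $V'(t)$ case is symmetric). The key structural point is that the interior part of the move is literally a Rotate-Square step on $G_\infty$: because $m \bmod 5 = n \bmod 5 = 2$, the copies of the pattern square in step (4) of Rotate-Square that fall wholly inside the grid behave exactly as in the infinite case, and by Lemma~\ref{lem:step} the image of $V(t)$ under the (infinite) Rotate-Square move would be a set containing some $V(t')$ or $V'(t')$. What Finite Rotate-Square changes is only the behavior at the boundary: whenever a copy of the pattern square straddles the boundary and the infinite rule would push a guard out of the grid, we instead slide boundary guards one step along the boundary so that a guard re-enters at one end while the leaving guard is "absorbed" at the other end; and when a black$\to$dark-gray transition has both endpoints on the boundary, the guard stays put. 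I would argue that (a) this boundary-shift is always a legal set of guard moves (each guard moves to a neighbor or stays), since boundary nodes form a path and shifting along a path is legal, and (b) the net effect on the set of occupied \emph{non-boundary} nodes is the same as the infinite Rotate-Square move restricted to the grid, because every guard that the infinite rule would have placed on a non-boundary node is still placed there (only its identity may differ). Combining (b) with Lemma~\ref{lem:step} gives $V(t'') \subseteq P'$ (or $V'(t'') \subseteq P'$) for the new placement $P'$; the boundary condition is preserved by construction of the shift (we never vacate a boundary node — either it keeps its guard, or a guard slides in to replace one sliding out); and domination of $P'$ follows because $V(t'')$ (resp. $V'(t'')$) already dominates the whole grid by Proposition~\ref{prop:D_t}, so the extra boundary guards only add redundancy.

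The main obstacle I expect is making precise and fully rigorous the claim that the boundary shifts exactly "patch up" the difference between the finite and infinite moves near every boundary segment and every corner simultaneously — in particular checking that the shifts demanded by two adjacent straddling pattern-square copies do not conflict, and that near a corner the two boundary directions can be reconciled (e.g. a guard leaving along the top edge and the corresponding arrival along the left edge must be routed consistently through the corner node). This requires a careful local case analysis of the eight attack patterns of Lemma~\ref{lem:step} intersected with "near an edge" and "near a corner," showing in each case that the required boundary guard motion is well-defined, legal, and leaves every boundary/corner node occupied. The congruence conditions $m \bmod 5 = n \bmod 5 = 2$ are what guarantee the pattern-square copies line up cleanly with the boundary so that this local analysis is uniform; I would state explicitly where these conditions are used. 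Everything else — domination, the $V(t)$-containment, and the counting needed later — then follows from the already-established Proposition~\ref{prop:D_t}, Lemma~\ref{lem:step}, and Lemma~\ref{lem:Vt}.
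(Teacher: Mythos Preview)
Your proposal is correct and follows essentially the same approach as the paper: induction on attacks, with the interior handled by Lemma~\ref{lem:step} and the boundary maintained by shifting along the perimeter, using $m \bmod 5 = n \bmod 5 = 2$ to make the pattern-square copies align with the grid edges. The one point worth noting is that the paper resolves your flagged ``main obstacle'' not by an exhaustive case analysis of the eight attack patterns against edges and corners, but by a single counting/symmetry observation: since $(m-2)$ and $(n-2)$ are divisible by $5$, each interior row adjacent to the boundary carries exactly $\frac{n-2}{5}$ guards, and the shape of a pattern square forces entries into and exits from the boundary to \emph{alternate} along that row --- so every exit is paired with a nearby entry, the shifts between them are disjoint and conflict-free, and the corners take care of themselves because exactly $\frac{n-2}{5}$ full pattern squares fit across each side with nothing left over.
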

									
									\begin{proof}
										Consider the $(m-2) \times (n-2)$ subgrid that remains when we remove the boundary rows and columns.
										Since $m \bmod 5 = n \bmod 5 = 2$, $(m-2)$ and $(n-2)$ perfectly divide $5$.
										The latter means that each row, respectively column, of the subgrid has exactly $\frac{n-2}{5}$, respectively $\frac{m-2}{5}$, guards on it.
										Without loss of generality, consider row one neighboring the upper boundary row, which is row zero.
										Let us assume that a pattern square propagation forces a row-one guard to move into the boundary.
										Then, by symmetry of the pattern guards' move, there exists another guard on the boundary row zero that needs to move downward to row one.
										Notice that the same holds for each of the $\frac{n-2}{5}$  guards lying on row one, since the pattern guards' move propagates in hops of distance five.
										Movements in and out of the boundary alternate due to the shape of the pattern square.
										Moreover, we do not need to care about where the pattern square repetition is "cut" by the left/right boundary since, due to $n-2$ perfectly dividing $5$, there are exactly $\frac{n-2}{5}$ full pattern squares occurring subject to shifting.
										Consequently, we can apply the shifting procedure demonstrated in Figure~\ref{fig:shift} to apply the moves and maintain a full boundary, while preserving the number of guards on row one.
										In Figure~\ref{fig:boundary-cut}, we demonstrate examples of the above remarks: $\frac{n-2}{5}$ guards move from row one into row zero, whereas $\frac{n-2}{5}$ move vice versa.
										By the discussion above, it is always possible to form pairs of leaving/entering boundary guards and apply the shifting procedure demonstrated in Figure~\ref{fig:shift} either leftward or rightward.
										
										The new placement $P$ is dominating, since the $(m-2) \times (n-2)$ subgrid is dominated by any $V(t)$ or $V'(t)$ placement and 
										the boundary is always full of guards. 
										Moreover, since we follow a modified Rotate-Square, $P$ contains as a subset a vertex set $V(t)$ or $V'(t)$ after each guards' turn.
									\end{proof}

									\begin{lemma}\label{lem:bound}
										For $m, n \ge 7$ such that $m \bmod 5 = n \bmod 5 = 2$,
										$\gamma^\infty_{m,n} \le \frac{mn}{5} + \frac{8}{5}(m+n) - \frac{16}{5}$ holds.
									\end{lemma}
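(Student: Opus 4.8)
The plan is to combine the correctness guarantee of \emph{Finite Rotate-Square} (Lemma~\ref{lem:finite-boundary}) with an explicit count of how many guards that strategy requires for a grid with $m \bmod 5 = n \bmod 5 = 2$. Lemma~\ref{lem:finite-boundary} already tells us that the strategy, started from the right initial configuration, perpetually maintains a dominating set in which every boundary and corner node is occupied; so the content of this lemma is purely the arithmetic of adding up (i) the guards of a maximum-size $V(t)$ placement and (ii) the extra \emph{boundary guards} needed to fill in the boundary nodes not already covered by that placement. The upper bound on $\gamma^\infty_{m,n}$ is then simply the total number of tokens used by a valid eternal strategy.

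First I would fix the initial placement: pick $t_1$ with $|V(t_1)| = \lceil \frac{mn}{5}\rceil$, which exists by Lemma~\ref{lem:Vt}; since $mn \equiv 4 \pmod 5$ here, $\lceil \frac{mn}{5}\rceil = \frac{mn+1}{5}$, though I'd likely keep the $\lceil\cdot\rceil$ form and only expand at the end. Next I would count how many of the $2m + 2n - 4$ boundary-and-corner nodes of $P_m\Box P_n$ are \emph{not} in $V(t_1)$; each such node needs one added boundary guard. The key sub-fact is that along any single line of the grid the function $f(x,y) = x + 2y \pmod 5$ cycles with period $5$, so on a boundary row of $n$ nodes exactly $\lceil n/5\rceil$ or $\lfloor n/5\rfloor$ lie in a fixed $D_t$, and similarly $\lceil m/5\rceil$ or $\lfloor m/5\rfloor$ on a boundary column; using $m \equiv n \equiv 2 \pmod 5$ pins these down to concrete values. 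Summing the four sides (with care not to double-count the four corners) gives the number of $D_{t_1}$-nodes on the boundary, hence the number of missing boundary nodes, hence the number of added guards. I would then add this to $\lceil\frac{mn}{5}\rceil$ and simplify to reach $\frac{mn}{5} + \frac{8}{5}(m+n) - \frac{16}{5}$.

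The one genuine subtlety — and the step I'd treat most carefully — is making sure the count is an \emph{upper bound} valid for \emph{every} $t$ reachable during play, not just for the specific $t_1$ we start with. Since attacks can drive the guards through any $V(t)$ or $V'(t)$ placement (Lemma~\ref{lem:finite-boundary} only guarantees \emph{some} such set is contained in the current placement), the number of interior guards supplied by the core placement can drop from $\lceil\frac{mn}{5}\rceil$ down to $\lfloor\frac{mn}{5}\rfloor$, and simultaneously the set of boundary nodes it happens to cover can change. But the total token count is fixed once and for all at round $0$ and never changes; what Lemma~\ref{lem:finite-boundary} buys us is precisely that this fixed supply of tokens always suffices (the interior is dominated by the embedded $V(t)/V'(t)$, and the remaining tokens always exactly fill the boundary because the boundary is kept full at every turn). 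So the honest argument is: total $=$ (size of initial $V(t_1)$) $+$ (boundary nodes not covered by $V(t_1)$) $= \lceil\frac{mn}{5}\rceil + \big(2m+2n-4 - b(t_1)\big)$ where $b(t_1)$ is the number of $D_{t_1}$-nodes on the boundary, and the bound in the statement is obtained by plugging in the value of $b(t_1)$ forced by $m \equiv n \equiv 2 \pmod 5$ and checking the arithmetic. I'd also note in passing that the stated inequality is not claimed tight — it is a first, loose bound that the subsequent subsections improve — so I need only exhibit \emph{one} working configuration, not the best one.

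One concrete execution: with $m = 5a+2$, $n = 5b+2$ one gets $\lceil\frac{mn}{5}\rceil$ for the interior supply, the top and bottom rows each contribute a fixed small number of $D_{t_1}$-nodes, likewise the left and right columns, and after subtracting from $2m+2n-4$ and adding back the interior count the terms collect to $\frac{mn}{5} + \frac{8}{5}(m+n) - \frac{16}{5}$; I would present this as a short displayed chain of equalities/inequalities rather than grinding every modular case, citing Lemma~\ref{lem:Vt} and Lemma~\ref{lem:finite-boundary} for the two ingredients. The main obstacle is bookkeeping, not mathematics: getting the corner nodes counted exactly once and getting the per-side $D_t$-counts right for the residue class $2 \pmod 5$.
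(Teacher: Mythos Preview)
Your approach is correct, but it is more laborious than the paper's and misses the single observation that makes the count trivial. The paper does not compute $b(t_1)$ at all: instead it decomposes the guard set as ``interior guards'' plus ``all boundary/corner nodes'', observing that when $m\equiv n\equiv 2\pmod 5$ the interior $(m-2)\times(n-2)$ subgrid has both dimensions divisible by~$5$, so \emph{every} $V(t)$ (and every $V'(t)$) contains exactly $\frac{(m-2)(n-2)}{5}$ interior nodes. Hence the total is simply $\frac{(m-2)(n-2)}{5} + 2(m+n)-4$, which expands to the stated expression in one line. Your formula $\lceil mn/5\rceil + (2m+2n-4 - b(t_1))$ is of course the same number, since $|V(t_1)| - b(t_1)$ \emph{is} the interior count; but by going through $b(t_1)$ you commit to per-side modular bookkeeping that depends on which $t_1$ realises the ceiling, whereas the paper's decomposition is manifestly independent of~$t$. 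This also dissolves your ``genuine subtlety'': the quantity that might fluctuate is $|V(t)|$, not the interior count, and since the total equals (interior count) $+$ (boundary size) it is the same for every $t$, so there is nothing to check about reachable placements.
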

									
									\begin{proof}
										By inductively applying Lemma~\ref{lem:finite-boundary}, Finite Rotate-Square eternally dominates $P_m \Box P_n$.
										
										From the initial $V(t)$ placement, we get exactly $\frac{(m-2)(n-2)}{5}$ guards within $P_{m-2}\Box P_{n-2}$, since $(m-2)$ and $(n-2)$ perfectly divide $5$.
										Then, we need another $2(m+n) - 4$ guards to cover the whole boundary.
										Overall, the guards sum to $ \frac{(m-2)(n-2)}{5} + 2(m+n)- 4 = \frac{mn}{5} + \frac{8}{5}(m+n) - \frac{16}{5}$.
									\end{proof}
									
									
									\subsection{An Improved Upper Bound: Partial Boundary Cover}
									
									In the version of Finite Rotate-Square just presented above, the entirety of the boundary always remains covered. More specifically, five guards are placed for every sequence of five non-corner boundary vertices. Optimistically, we would like to lower the number of guards to two guards per every five boundary vertices. Then, compared to the standard domination number, this would provide only a constant-factor additive term.
									In this subsection, we prove an improved upper bound for Finite Rotate-Square by using three guards for each sequence of five non-corner boundary vertices. Furthermore, we discuss why having two guards would instead most likely fail for Finite Rotate-Square (or simple variations of it).
									
									\begin{lemma}\label{lem:imp-bound}
										For $m, n \ge 7$ such that $m \bmod 5 = n \bmod 5 = 2$,
										$\gamma^\infty_{m,n} \le \frac{mn}{5} + \frac{4}{5}(m+n)$ holds.
									\end{lemma}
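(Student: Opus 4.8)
# Proof Proposal for Lemma~\ref{lem:imp-bound}

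\textbf{Approach.} The plan is to refine the Full Boundary Cover strategy of Lemma~\ref{lem:finite-boundary} and Lemma~\ref{lem:bound} by noticing that we do not actually need a guard on \emph{every} boundary node to keep the boundary dominated. A boundary node can be dominated either by a boundary guard or by an interior guard sitting one step in from the boundary. Since any $V(t)$ (or $V'(t)$) placement restricted to the $(m-2)\times(n-2)$ interior subgrid already puts, on average, one guard per five cells in row/column $1$ (the rows/columns adjacent to the boundary), some boundary cells are already dominated from inside. The claim is that with a thinned-out set of boundary guards — $3$ per block of $5$ consecutive non-corner boundary cells instead of $5$ — one can still (i) keep the full boundary dominated at all times and (ii) execute the shifting mechanism of Figure~\ref{fig:shift} so that Finite Rotate-Square continues to work. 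Since $m\bmod 5 = n\bmod 5 = 2$, each side of the boundary splits cleanly: one side has $n-2$ non-corner boundary cells $= 5\cdot\frac{n-2}{5}$, so we place exactly $3\cdot\frac{n-2}{5}$ boundary guards along it, and similarly $3\cdot\frac{m-2}{5}$ along a vertical side. Together with the $4$ corner guards and the $\frac{(m-2)(n-2)}{5}$ interior guards, the total is
\[
\frac{(m-2)(n-2)}{5} + \frac{6(m-2)}{5} + \frac{6(n-2)}{5} + 4 = \frac{mn}{5} + \frac{4}{5}(m+n),
\]
which is exactly the claimed bound (the arithmetic reshuffles the same way as in Lemma~\ref{lem:bound}).

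\textbf{Key steps.} First I would fix, for each of the four sides, a \emph{periodic pattern of period $5$} specifying which $3$ of every $5$ boundary cells carry a boundary guard; this pattern must be compatible with the interior $V(t)$ so that the $2$ uncovered boundary cells in each period are precisely the ones dominated by an interior guard in row/column $1$. Concretely, in a block of $5$ consecutive boundary cells, the interior neighbors form a length-$5$ segment of row $1$, which contains exactly one interior guard under any $D_t$; that guard dominates (at most) one boundary cell, so I must choose the $3$ boundary-guard positions to cover the remaining $4$ boundary cells — meaning I actually need the $3$ boundary guards to dominate $4$ cells among $5$, which works since a guard dominates itself and its two boundary-neighbors. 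I would lay out the explicit offsets and verify this covering condition for all $5$ residues $t$ (and the $f'$ variant), handling the corner blocks separately. Second, I would re-examine the shifting operation: when a pattern-square propagation pushes an interior (row-$1$) guard out onto the boundary, a boundary guard must move in, and the intervening boundary guards shift by one along the boundary. With the thinned boundary this shift now moves guards across the ``gaps''; I would argue that because the interior pattern moves in hops of $5$ and the boundary pattern has period $5$, after a shift the boundary guards again realize a (cyclically rotated, hence still valid) period-$5$ covering pattern, and the interior again forms some $V(t')$ or $V'(t')$ — invoking Lemma~\ref{lem:step}/Theorem~\ref{thm:INF} for the interior behavior exactly as in Lemma~\ref{lem:finite-boundary}. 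Third, I would note that the corners need individual attention — the two adjacent partial blocks meeting at a corner plus the corner guard itself — and check by the small-grid inspection argument (Figure~\ref{fig:mod5cases}-style, on $12\times 12$ grids, since larger grids with the same residues evolve identically) that the corner regions stay dominated through every shift. Finally I would assemble the count above and conclude that Finite Rotate-Square with partial boundary cover eternally dominates $P_m\Box P_n$ with the stated number of guards.

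\textbf{Main obstacle.} The delicate part is step two: verifying that the thinned boundary pattern is \emph{invariant under the shifting operation} and that no uncovered boundary cell is ever simultaneously abandoned by both its boundary-guard neighbor and its interior-guard neighbor. With the full cover this was automatic; with only $3$ per $5$, a boundary cell relies on a specific interior guard, and during a Rotate-Square step that interior guard is exactly the kind of guard that may move — so I must confirm that whenever the responsible interior guard slides off row $1$, the rotation simultaneously brings a boundary guard (or another interior guard) into dominating range of that cell. This amounts to a careful case check over the interaction of the (at most two) pattern-square orientations near the boundary with the period-$5$ boundary pattern, and over which of the four sides / which residue class we are in; I expect this to be the bulk of the proof, with the corner cases a secondary but nontrivial nuisance. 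The global induction (apply the single-turn invariant forever) and the final arithmetic are routine once the invariant is established.
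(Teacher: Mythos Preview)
Your proposal is essentially correct and lands on the same guard budget and the same high-level mechanism as the paper: interior guards follow Rotate-Square on the $(m-2)\times(n-2)$ subgrid, four corner guards stay put, and each length-$5$ non-corner boundary segment carries exactly $3$ guards. The arithmetic and the final count match the paper exactly.

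Where you diverge is in how the invariant is verified. You plan to argue directly: fix a period-$5$ boundary pattern compatible with each $V(t)/V'(t)$, then check that every Rotate-Square transition (plus the induced boundary shift) sends one valid pattern to another, treating sides and corners as separate case analyses. The paper instead observes that, because $m\bmod 5 = n\bmod 5 = 2$, the entire configuration is periodic with period $5$ in both directions, so all $5\times 5$ interior blocks and all length-$5$ side segments are identical copies of one another at every moment. This lets the paper \emph{contract} the whole grid to a single $7\times 7$ instance (one interior $5\times 5$ block, one segment per side, four fixed corners) and then simply enumerate all ten global states ($D_t$ and $D'_t$ for $t\in\mathbb{Z}_5$) together with every transition, in one figure. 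Your $12\times 12$ inspection for the corners is a partial version of this idea; the paper pushes it all the way and unifies sides and corners into one finite check.

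Both routes work, but the paper's reduction buys a cleaner proof: instead of separately maintaining a side-segment invariant, a corner invariant, and their interaction under shifting, you get a single small state machine whose closure under transitions is verified by exhaustive inspection. Your ``main obstacle'' --- showing that a boundary cell is never simultaneously abandoned by its interior and boundary dominators --- dissolves into that enumeration. Conversely, your approach, if carried out, would make the role of $\gamma^\infty(P_5)=3$ more transparent (the paper only remarks on this after the proof), and would be closer in spirit to the Full-Boundary argument of Lemma~\ref{lem:finite-boundary}.
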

									
									\begin{proof}
										First, let us take advantage of the condition $m \bmod 5 = n \bmod 5 = 2$ in order to reduce this family of grids to the $7\times7$ grid case.
										Imagine a general $m\times n$ grid where $m \bmod 5 = n \bmod 5 = 2$ holds.
										The non-boundary vertices can be partitioned into $\frac{(m-2)(n-2)}{5}$ subgrids of size $5\times5$, e.g., see Figure~\ref{fig:reduction}.
										Moreover, we add four guards, one in each corner, which never move throughout the execution of the strategy, since they can never leave the boundary.
										Then, we can partition each boundary row/column of the grid into sequences of length five.
										
										Now, notice that the far-from-the-boundary guards implement Rotate-Square and, due to the modulo $5$ use in the emergent $D_t$ and $D_t'$ placements, all $5\times5$ subgrids are copies of each other. 
										Moreover, for the same reason, all segments of length five on the same boundary row/column look identical at all times.
										Thence, we can contract all $5\times 5$ subgrids and side segments until a $7\times 7$ grid is left. Below, we provide a strategy for this special case.
										For $m, n > 7$, the strategy can be extracted by copying the $7\times 7$ strategy in each subgrid and boundary row/column segment.
										
										\begin{figure}
											\centering
											\includegraphics[scale = 0.55]{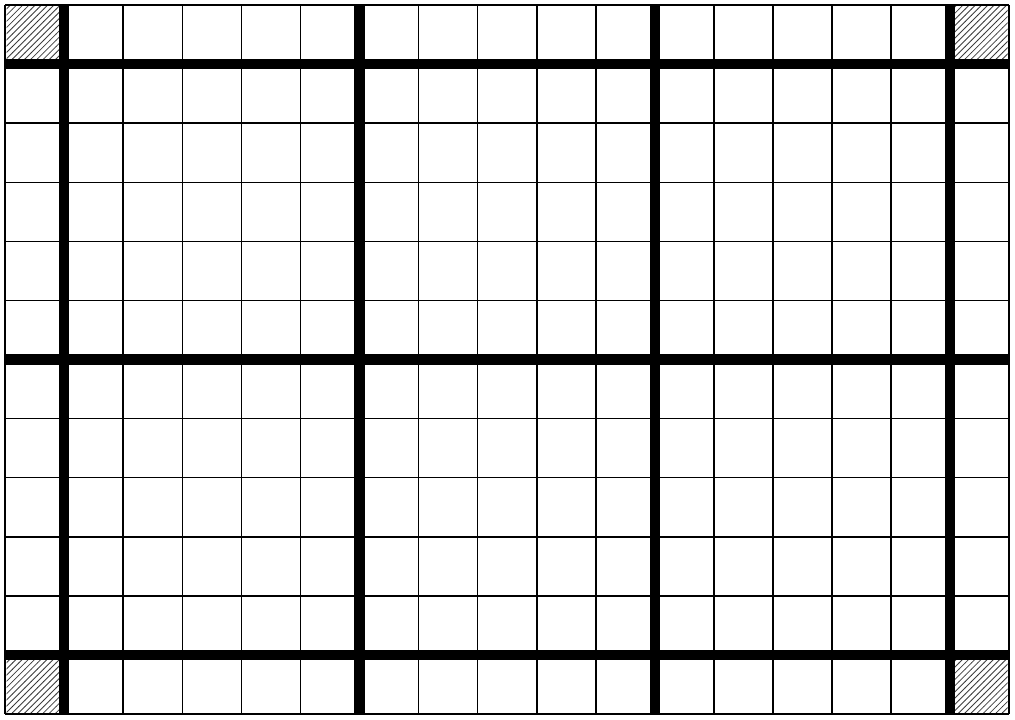}
											\caption{An example of partitioning a $17\times 12$ grid into $5\times5$ subgrids}
											\label{fig:reduction}
										\end{figure}
										
										Hence, to prove the bound, it suffices to provide an m-eternal domination strategy for the $7\times7$ grid with each corner always occupied by an immovable guard and three guards guarding each boundary row/column of length five.  
										In Figure~\ref{fig:imp}, we demonstrate such placements for the $7\times7$ grid: in column $A$ on the left, we depict all possible $D_t$ placements with the corresponding boundary cover, whereas, in column $B$ on the right, we depict all possible $D_t'$ placements. Moreover, in column $A$, we provide all the guard transitions for an attack to an unoccupied vertex. Transitions in column $B$ are omitted since all  guard movements are reversible. 
										
										By inductively applying the strategy of Figure~\ref{fig:imp}, this improved version of Finite Rotate-Square eternally dominates $P_m \Box P_n$, since the guards always form an $A$ or $B$ placement.
										From the initial $V(t)$ placement, we get exactly $ \frac{(m-2)(n-2)}{5}$ guards within $P_{m-2}\Box P_{n-2}$, since $(m-2)$ and $(n-2)$ perfectly divide $5$.
										Then, we require four guards for the corners and another $\frac{3}{5}(m-2)$, respectively $\frac{3}{5}(n-2)$, to cover a side of the grid.
										Overall, we get $ \frac{(m-2)(n-2)}{5} + 2\cdot\frac{3}{5}(m-2 + n-2) + 4 = \frac{mn}{5} + \frac{4}{5}(m+n)$ guards suffice for m-eternal domination.
									\end{proof} 
									
									\begin{figure}
										\centering
										\vspace{-55pt}
										\includegraphics[scale = 0.515]{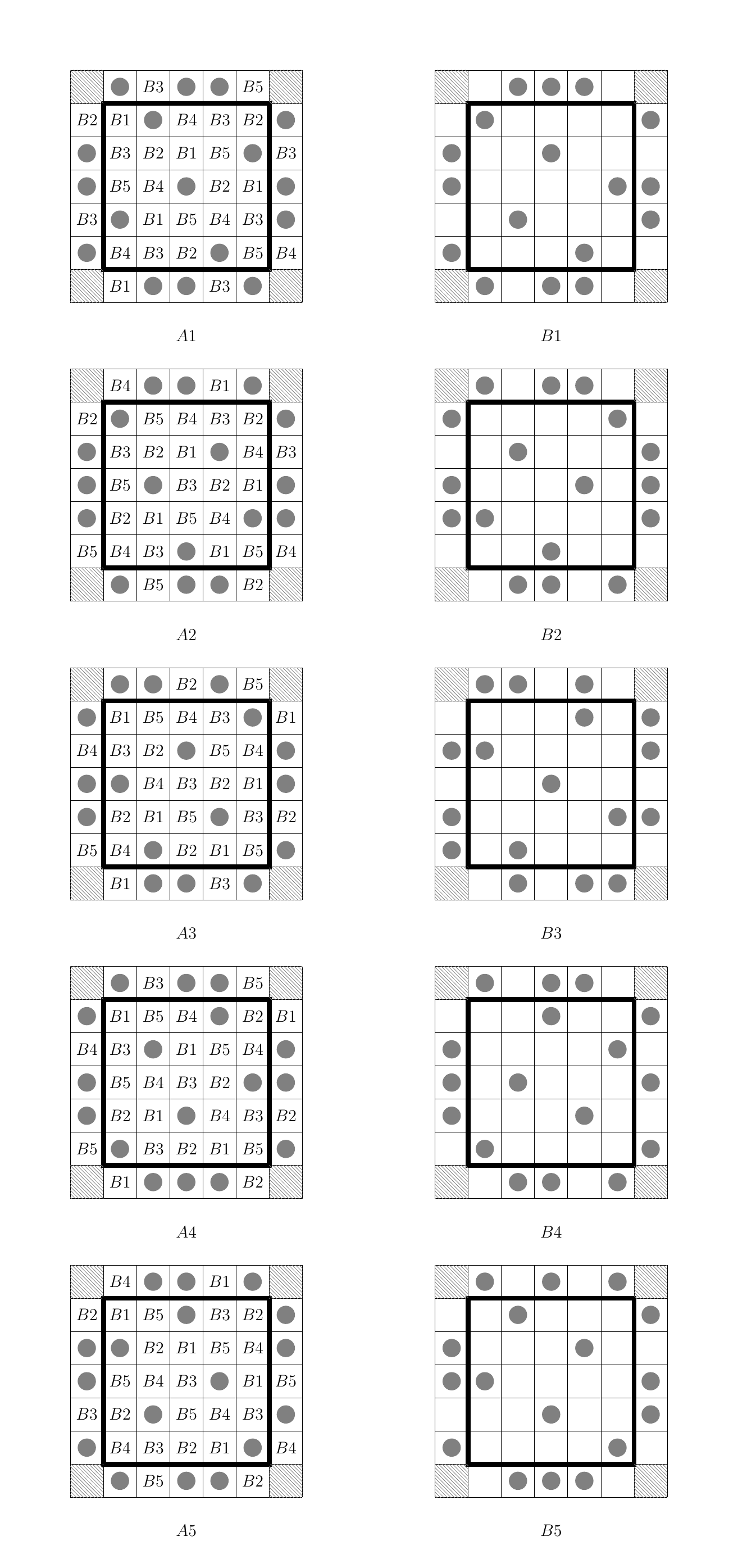}
										\caption{Improved Finite Rotate-Square (reduced to the $7\times7$ grid)}
										\label{fig:imp}
									\end{figure}
									
									
									The above proof is crucially based on the fact that $\gamma^\infty_{\mathrm{m}}(P_5) = 3$.
									Indeed, it is easy to verify that two guards cannot m-eternally dominate a path of length five.
									Therefore, a uniform approach as the one taken in the proof of Lemma~\ref{lem:imp-bound} is bound to fail. Furthermore, for non-uniform boundary guarding approaches, the problem seems to persist.
									In such approaches, boundary guards are not dedicated to a single $P_5$ segment of the side.
									Intuitively, the latter can easily lead to the creation of bias, meaning that eventually the extra corner guard (or any constant number of extra corner guards) has to move in order to assist with the protection of the boundary and leave the corner unoccupied.
									
									\subsection{A General Upper Bound}
									
									So far, we have focused on the special case $m \bmod 5 = n \bmod 5 = 2$ for which we provided an upper bound for the m-eternal domination number.
									We generalize this bound for arbitrary $m$, $n$ values.
									
									\begin{lemma}\label{lem:general}
										For $m, n \ge 7$, $\gamma^\infty_{m,n} \le \frac{mn}{5} + \mathcal{O}(m + n)$ holds.
									\end{lemma}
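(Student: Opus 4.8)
The plan is to reduce the general case to the already-treated case $m \bmod 5 = n \bmod 5 = 2$ by padding each dimension up to the nearest integer that is congruent to $2$ modulo $5$, and then absorbing the padded strip into the $\mathcal{O}(m+n)$ additive term. First I would note that for any $m \ge 7$ there is an integer $\hat m$ with $\hat m \bmod 5 = 2$ and $m \le \hat m \le m + 4$, and likewise $\hat n$ with $\hat n \bmod 5 = 2$ and $n \le \hat n \le n+4$; in particular $\hat m, \hat n \ge 7$ so Lemma~\ref{lem:imp-bound} applies to the $\hat m \times \hat n$ grid.

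Next I would use monotonicity of the eternal domination number under taking subgraphs that are ``spanning-friendly'' — more precisely, $P_m \Box P_n$ is a subgrid of $P_{\hat m} \Box P_{\hat n}$, and any eternal dominating strategy for the larger grid restricts to one for the smaller: the guards simply ignore the extra rows and columns, treating an attack inside $P_m \Box P_n$ exactly as in the full grid, with the caveat that one must check the restricted placement is still dominating for the subgrid. A cleaner route that avoids this subtlety is to argue directly: run Finite Rotate-Square on the $\hat m \times \hat n$ grid but designate the outer $(\hat m - m)$ rows and $(\hat n - n)$ columns as a permanently guarded ``collar'' (analogous to the boundary-guard trick), so that attacks only ever occur inside $P_m \Box P_n$ and domination there is inherited from Lemma~\ref{lem:finite-boundary}/\ref{lem:imp-bound}. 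Either way we obtain
\[
 \gamma^\infty_{m,n} \le \gamma^\infty_{\hat m, \hat n} \le \frac{\hat m \hat n}{5} + \frac{4}{5}(\hat m + \hat n).
\]

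Then I would bound the right-hand side in terms of $m$ and $n$: since $\hat m \le m+4$ and $\hat n \le n+4$, we have $\hat m \hat n \le (m+4)(n+4) = mn + 4m + 4n + 16$, so $\frac{\hat m \hat n}{5} \le \frac{mn}{5} + \frac{4}{5}(m+n) + \frac{16}{5}$, and $\frac{4}{5}(\hat m + \hat n) \le \frac{4}{5}(m+n) + \frac{32}{5}$. Adding these gives $\gamma^\infty_{m,n} \le \frac{mn}{5} + \frac{8}{5}(m+n) + \frac{48}{5} = \frac{mn}{5} + \mathcal{O}(m+n)$, which is the claimed bound.

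The main obstacle is making the reduction step rigorous: strictly speaking one needs to be sure that embedding $P_m \Box P_n$ into a larger congruent-to-$2$ grid and transporting the strategy back does not break domination or legality of moves near the artificial boundary. The safe way, which I would take, is to mirror the construction in the proof of Lemma~\ref{lem:imp-bound}: keep the extra collar rows/columns fully occupied by immovable guards (at a cost of $O(m+n)$ extra guards, which is harmless for the stated asymptotics), so that the rioter is effectively confined to $P_m \Box P_n$ and the interior dynamics are literally those of the already-verified finite strategy; everything else is the routine arithmetic above. A remark that one could instead tune the construction to get the sharper $\lceil mn/5\rceil + O(m+n)$ promised in the introduction — by choosing the starting placement $V(t_1)$ of size $\lceil mn/5 \rceil$ as in Lemma~\ref{lem:bound} rather than padding the area — would round out the argument, but is not needed for the $\mathcal{O}(m+n)$ statement.
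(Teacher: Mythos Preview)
Your reduction goes in the wrong direction. You pad $P_m \Box P_n$ \emph{up} to a larger grid $P_{\hat m}\Box P_{\hat n}$ and then invoke $\gamma^\infty_{m,n}\le\gamma^\infty_{\hat m,\hat n}$, but this monotonicity is neither proved nor obvious in the model used here, where multiple guards may not share a node. The natural retraction $P_{\hat m}\Box P_{\hat n}\to P_m\Box P_n$ can collapse several guards onto the same cell, so a strategy on the larger grid does not project to a legal one on the smaller grid; and your ``collar'' alternative is confused about which graph the game is played on. If the game is on $P_m\Box P_n$ there are no extra rows or columns to occupy, while if it is on $P_{\hat m}\Box P_{\hat n}$ you are bounding the wrong quantity. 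Saying ``the interior dynamics are literally those of the already-verified finite strategy'' is also not right: the Finite Rotate-Square strategy for $P_{\hat m}\Box P_{\hat n}$ lives on the boundary of the $\hat m\times\hat n$ grid, which sits inside your collar, so the collar cannot be immovable while that strategy runs.

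The paper sidesteps all of this by going the other way. It picks a \emph{smaller} subgrid $(m-i)\times(n-j)$ with $0\le i,j\le 4$ and $(m-i)\bmod 5=(n-j)\bmod 5=2$ (so each side is still at least $7$), runs Finite Rotate-Square on that subgrid via Lemma~\ref{lem:imp-bound}, and permanently occupies every node of the leftover $i$ rows and $j$ columns at a cost of at most $in+jm=\mathcal{O}(m+n)$ extra guards. No attack can land on a fully occupied strip, the subgrid strategy is never disturbed, and the union of the two guard sets dominates $P_m\Box P_n$ after every turn. This is precisely the ``thickened boundary'' your collar description is reaching for, but placed \emph{inside} the grid where guards can actually stand; once you reverse the direction of the padding, your arithmetic goes through unchanged.
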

									\begin{proof}
										The idea behind this general bound is to "thicken" the boundary in the cases when $m \bmod 5 = n \bmod 5 = 2$ does not hold and then apply Finite Rotate-Square as above.
										More formally, one can identify an $(m-i)\times(n-j)$ subgrid, in the interior of the $m \times n$ grid, where $i, j \le 5$ such that $(m-i) \bmod 5 = (n-j) \bmod 5 = 2$ and execute the strategy there.
										For the rest of the rows and columns, they can be eternally secured by populating them with $\mathcal{O}(m + n)$ extra guards, e.g, in the worst-case, place one guard at each such cell.
									\end{proof}
									
									Gon\c{c}alves et al. \cite{Goncalves} showed $\gamma_{m,n} \ge \lfloor \frac{(m+2)(n+2)}{5} \rfloor - 4 $ for any $m, n \ge 16$.
									By combining this with Lemma~\ref{lem:Chang}, we get the exact domination number $\gamma_{m,n} =  \lfloor \frac{(m+2)(n+2)}{5} \rfloor - 4$ for $m, n \ge 16$.
									Then, by using Lemma~\ref{lem:general}, our main result follows. 
									
									\begin{theorem}\label{thm:main}
										For any $m, n \ge 16$, $\gamma^\infty_{m,n} \le \gamma_{m,n} + \mathcal{O}(m + n)$ holds. 
									\end{theorem}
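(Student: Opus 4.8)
The plan is to chain together the three ingredients that have already been assembled: the exact domination number for large grids, Chang's upper bound, and our general eternal-domination upper bound from Lemma~\ref{lem:general}. First I would recall that Gon\c{c}alves et al.\ established the matching lower bound $\gamma_{m,n} \ge \lfloor \frac{(m+2)(n+2)}{5} \rfloor - 4$ for all $m,n \ge 16$, which together with Chang's Lemma~\ref{lem:Chang} pins down $\gamma_{m,n} = \lfloor \frac{(m+2)(n+2)}{5} \rfloor - 4$ exactly in this range. The point of invoking this identity is purely to have a clean closed form for $\gamma_{m,n}$ against which to compare our eternal bound; it turns the asymptotic statement into an honest inequality between two explicit quantities.

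Next I would take the upper bound $\gamma^\infty_{m,n} \le \frac{mn}{5} + \mathcal{O}(m+n)$ from Lemma~\ref{lem:general}, valid for all $m,n \ge 7$ and in particular for $m,n \ge 16$. The only remaining task is to observe that $\frac{mn}{5}$ and $\gamma_{m,n} = \lfloor \frac{(m+2)(n+2)}{5}\rfloor - 4$ differ by an $\mathcal{O}(m+n)$ term: expanding $(m+2)(n+2) = mn + 2m + 2n + 4$ gives $\lfloor \frac{(m+2)(n+2)}{5}\rfloor - 4 = \frac{mn}{5} + \frac{2(m+n)+4}{5} - 4 + \Theta(1)$, so $\frac{mn}{5} = \gamma_{m,n} - \frac{2}{5}(m+n) + \mathcal{O}(1)$. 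Substituting this into the Lemma~\ref{lem:general} bound yields
\[
\gamma^\infty_{m,n} \;\le\; \gamma_{m,n} - \tfrac{2}{5}(m+n) + \mathcal{O}(1) + \mathcal{O}(m+n) \;=\; \gamma_{m,n} + \mathcal{O}(m+n),
\]
which is exactly the claim. In other words, the $\frac{mn}{5}$ leading term in our construction already agrees with the leading term of $\gamma_{m,n}$, and every lower-order discrepancy — the $\frac{2}{5}(m+n)$ slack from the $(m+2)(n+2)$ bookkeeping, the floor, and the $-4$ — is subsumed into the $\mathcal{O}(m+n)$ error we are allowed.

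There is really no hard step here: the theorem is a packaging statement whose entire content is that the extra guards needed for eternal domination over ordinary domination is only linear in the perimeter. The one place to be slightly careful is making sure the $\mathcal{O}(m+n)$ constants are genuinely uniform over the whole range $m,n \ge 16$ rather than hiding a dependence that blows up for awkward residues of $m,n$ modulo $5$; but this is already guaranteed by Lemma~\ref{lem:general}, whose thickening argument adds at most $5$ rows/columns and then $\mathcal{O}(m+n)$ guards to secure them, with absolute constants. Hence the proof is just: quote the exact value of $\gamma_{m,n}$, quote Lemma~\ref{lem:general}, and absorb the difference.
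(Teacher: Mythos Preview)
Your proposal is correct and follows exactly the same route as the paper: quote the exact value $\gamma_{m,n} = \lfloor (m+2)(n+2)/5 \rfloor - 4$ from Gon\c{c}alves et al.\ together with Lemma~\ref{lem:Chang}, invoke Lemma~\ref{lem:general}, and absorb the $\mathcal{O}(m+n)$ difference. You spell out the arithmetic of that difference more explicitly than the paper does, but the argument is identical.
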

									
									\section{Conclusions}\label{sec:con}
									
									We demonstrated a first strategy to m-eternally dominate general rectangular grids based on the repetition of a rotation pattern.
									Regarding further work, a more careful case analysis of the boundary may lead to improvements regarding the coefficient of the linear term.
									It is unclear whether this strategy can be used to obtain a constant additive gap between domination and m-eternal domination in large grids.
									Furthermore, the existence of a stronger lower bound than the trivial $\gamma^\infty_{m,n} \ge \gamma_{m,n}$ bound also remains open.
									

								\end{document}